\def\BibTeX{{\rm B\kern-.05em{\sc i\kern-.025em b}\kern-.08em
    T\kern-.1667em\lower.7ex\hbox{E}\kern-.125emX}}
\newtheorem{theorem}{Theorem}[section]
\newtheorem{lemma}[theorem]{Lemma}
\newtheorem{definition}[theorem]{Definition}
\newtheorem{proposition}[theorem]{Proposition}
\newtheorem{remark}[]{Remark}
\newtheorem{example}{Example}
\renewcommand{\textcolor}[2]{%
  \def\tempa{#1}%
  \def\tempb{catalogueblue}%
  \ifx\tempa\tempb
    #2
  \else
    \textcolorOriginal{#1}{#2}
  \fi
}
\begin{document}
\title{\textcolor{catalogueblue}{Quantitative Parameter Conditions for Stability and Coupling in GFM–GFL Converter Hybrid Systems from a Small-Signal Synchronous Perspective}}
\author{Kehao Zhuang, Huanhai Xin, Hangyu Chen, and Linbin Huang  \vspace{-5mm}

}

\maketitle

\begin{abstract}
With the development of renewable energy sources, power systems are gradually evolving into a system comprising both grid-forming (GFM) and grid-following (GFL) converters. However, the dynamic interaction between the two types of converters, especially low-inertia GFM converters and GFL converters, remains unclear due to the substantial differences in their synchronization mechanisms. To address this gap, this paper develops a small-signal synchronous stability model for power systems containing GFM and GFL converters, which considers network line dynamics. \textcolor{catalogueblue}{Based on subspace perturbation theory, we reveal that GFM and GFL subsystems can be effectively decoupled when GFL converters operate near unity power factor or when GFM converters possess sufficiently large inertia or damping, and provide lower bound of control parameters ensuring decoupling.} Under the decoupling condition, we propose decentralized and analytical  parameter-based stability criteria which have clear physical interpretations: the positive damping of converters compensates for the negative damping of the network. In the case of coupling, we also propose decentralized stability criteria based on the small phase theorem. The effectiveness of the theoretical analysis is validated through simulations in MATLAB/Simulink.

\end{abstract}

\begin{IEEEkeywords}
Synchronous stability, Grid-Forming, Grid-Following, converter, coupling, small-phase.
\end{IEEEkeywords}

\section{Introduction}
\IEEEPARstart{S}{ynchronous} stability is a basic requirement for AC power systems operation, which means that all devices connected into the AC grid have same operating frequency \cite{xiongfei:syn_overview}. In conventional power systems, the synchronization is dominated by the rotor angle swing of synchronous generator (SG), which has a clear physical interpretation. Under power imbalance, SGs adjust their rotor speeds to regulate the voltage frequency until power balance is restored and all SGs operate at the same frequency \cite{kunder:stability_classification}. However, with the increasing renewable energy sources integrated into the AC grid through power electronics (PE) converters, the synchronous characteristics of power systems are transformed from physical synchronization dominated by SGs to control synchronization dominated by converters \cite{xiongfei:syn_overview}-\cite{Linbin:PLL_sy}. Unlike SGs, converters adopt more different synchronization methods and exhibit faster synchronous dynamics, which are deeply coupled with each other and network dynamics \cite{Linbin:PLL_sy}. The shift in synchronization mechanisms and more complex dynamics have rendered conventional analysis methods based on SGs no longer applicable. Therefore, it is essential to revisit the synchronous stability to ensure the security of converter-dominated power systems.

In recent decades, the most widely used synchronization control in converters is Phase-Locked Loop (PLL), which rapidly tracks the grid voltage frequency and is thus known as Grid-Following (GFL) control \cite{Thomas:GFM_GFL_comparison}. The increasing GFL converters connected to the grid leads to weak grid characteristics with low short circuit ratio (SCR) and various new synchronous stability issues \cite{Huanhai:gSCR}. For example, many researches have demonstrated that GFL converters will occur sub/super synchronous oscillations dominated by PLL in low SCR weak grid \cite{Linbin:PLL_sy}-\cite{Huanhai:gSCR}. In order to reveal the instability mechanisms of GFL converters, many excellent analysis methods, such as sequence impedance \cite{Sunjian:squence_model,Zhangchen:squence_model}, dq frame impedance \cite{Leon:dq_model,WenBo:dq_model}, polar frame impedance \cite{Huanhai:gimpendece}, and compound torque coefficient \cite{Huqi:torque_method}, were proposed. \textcolor{catalogueblue}{The study shows that the bandwidth of the PLL is a key factor affecting the small-signal synchronization stability of GFL inverters~\cite{iecon:PLL_band,zzx:2021}. The main reason for the small-signal synchronous instability of a GFL converter connected to a grid is the negative damping caused by dynamics coupling between PLL and network line with low SCR. Moreover, when interactions among multiple GFL converters are considered, the damping characteristics of the PLL may be further degraded, potentially leading to oscillations over a wider range~\cite{wgz:numuricalrange}.}

\textcolor{catalogueblue}{Placing Grid-Forming (GFM) converters with large inertia coefficients has been proven to be a promising solution to suppress GFL converter oscillations by enhancing SCR~\cite{Huanhai:place_GFM}-\cite{gfm100}. However, because the core idea of GFM converters is to adopt power synchronization control (PSC, such as virtual synchronous generators, VSGs) by mimicking the SG rotor~\cite{wxf:gfm_review,GFM:keytechonology}, they also inherit the low-frequency oscillation characteristics of SGs, which essentially corresponds to the synchronization instability of GFMs~\cite{dcsygfm:2023,frade:dampeenhance}. Specifically, if low-inertia GFM control is implemented on device such as photovoltaic systems due to limited energy, oscillations caused by GFM synchronization instability may extend from the low-frequency range into the sub- and super-synchronous frequency bands~\cite{cyd_poweroscillation:2025ps}. This can lead to coupling with GFL converters, which may not only fail to suppress GFL-induced oscillations but could also introduce new challenges due to the exacerbating effects of coupling~\cite{cyd_poweroscillation:2025ps,kehao:dual_axis}. }

Although more and more studies focus on the dynamic interaction analysis of GFL and GFM converters, the effect of coupling on the small-signal synchronous stability is not clear. \textcolor{catalogueblue}{The most common approach is to establish impedance models for multiple GFMs and GFLs, and then analyze system stability using the generalized nyquist criterion (GNC) or root locus methods by sweeping through parameter ranges to obtain qualitative stability conclusions \cite{PLL_GFM:whwxf,unifiedmodel:EPSR}. Such modeling couples the dynamics of GFMs, GFLs, and the network together, making it difficult to intuitively reveal the underlying stability mechanisms or derive quantitative parameter conditions. }

\textcolor{catalogueblue}{Furthermore, Ref. \cite{Liyitong:synchronization}\cite{Yangziqian:synchronization} , based on electromechanical models, reveal the quantitative characteristics of GFM–GFL converters coupling at the electromechanical timescale. However, since network dynamics are not considered, these models fail to capture the sub-synchronous oscillations arising from the coupling between converters and the network~\cite{MARUI:multi_time}. }

\textcolor{catalogueblue}{In addition, Ref. \cite{Linbin:gain_phase} proposes a quantization stability method considering network dynamics based on small gain-phase theorem. Unfortunately, it still fails to establish an analytical relationship between control parameters and small-signal synchronization stability, nor does it provide quantitative boundary conditions for the coupling between GFM and GFL converters. To help engineers better understand and plan the system, it is essential to provide simple parameter conditions for small-signal synchronous stability. }



Therefore, the above gaps lead to two unresolved key questions: 1. What are the boundary conditions for synchronous dynamics coupling of GFM and GFL converters? 2. What are the parameter conditions for small signal synchronization stability of GFM and GFL converter hybrid systems? To answer the two questions, this paper presents a synchronization analysis model and a quantitative small-signal stability analysis method. The main four contributions are as follows.

\textcolor{catalogueblue}{1) We establish a synchronous stability analysis model for hybrid GFM–GFL power systems that incorporates network line dynamics. The proposed small-signal model can be simplified to a form similar to a multi-SG second-order model, even at the electromagnetic transient timescale, facilitating intuitive understanding of synchronization stability through concepts such as synchronizing and damping torques.}

\textcolor{catalogueblue}{2) Using the Davis–Kahan subspace perturbation theory, we derive quantitative decoupling conditions for GFM and GFL converters. These conditions provide theoretical lower bounds for GFM inertia and damping, ensuring decoupling from GFLs and helping suppress GFL-induced oscillations. We recommend configuring GFMs to remain decoupled based on these conditions to avoid unintended deterioration.}

\textcolor{catalogueblue}{3) When decoupling is achieved, we propose quantitative parameter conditions for small-signal synchronous stability of the GFL and GFM subsystems.  These conditions separate network information from converter parameters, requiring only a static matrix based on steady-state network flow, and avoid the complexity of computing dynamic frequency-domain matrices for both network and converters.}

\textcolor{catalogueblue}{4) For unintended GFM–GFL coupling, parameter conditions based on the small phase theorem are provided. While these involve the network’s dynamic frequency-domain matrices, they still avoid computing the converters’ dynamic matrices and allow direct determination of controller parameters.  Although more general and applicable to both coupled and decoupled scenarios, using the simple decoupled parameter conditions from 2) and 3) remains more convenient for engineering practice.}

\vspace{-5mm}
\section{Small-Signal Modeling of Multi-Converter Power Systems} 
\textcolor{catalogueblue}{$\theta$ and $\theta_{\rm g}$ are the phases of the local dq frame and the grid voltage, respectively; $\omega$ and $\omega_0$ are the angular frequencies of local $dq$ frame and global $xy$ frame, respectively. $i$ and $v$ denote the current and voltage of converter. The subscripts $_d$ and $_q$ denote the $d$- and $q$-axis components in the local $dq$ frame, The subscripts $_x$ and $_y$ denote the $x$- and $y$-axis components in the global $xy$ frame. $(\cdot)_i$ represents the variable of the $i$-th converter}

\textcolor{catalogueblue}{In this paper, we assume a three-phase balanced power system for analysis. More complex scenarios, such as three-phase unbalanced conditions, require separate and further investigation.}
\subsection{Converters model}

The typical control schemes of a GFL converter and a GFM converter are shown in Fig.\ref{fig1}. A converter synchronizes with the grid via PSC (for GFM) or PLL (for GFL), each generating a local dq frame used for voltage/current vector control. As shown in Fig.\ref{fig2}, the synchronization mechanism of a converter is to ensure the converging phase difference between its local dq frame and the grid voltage by dynamically adjusting its local frame in response to unbalanced synchronous signal, i.e. $\lim \limits_{t \to \infty} \theta-\theta_{\rm g}=\theta_0 $ and $\omega=\omega_0$ ($\theta_0$ is the constant for the steady state phase difference). 
\begin{figure}
	\centering
	\includegraphics[width=3in]{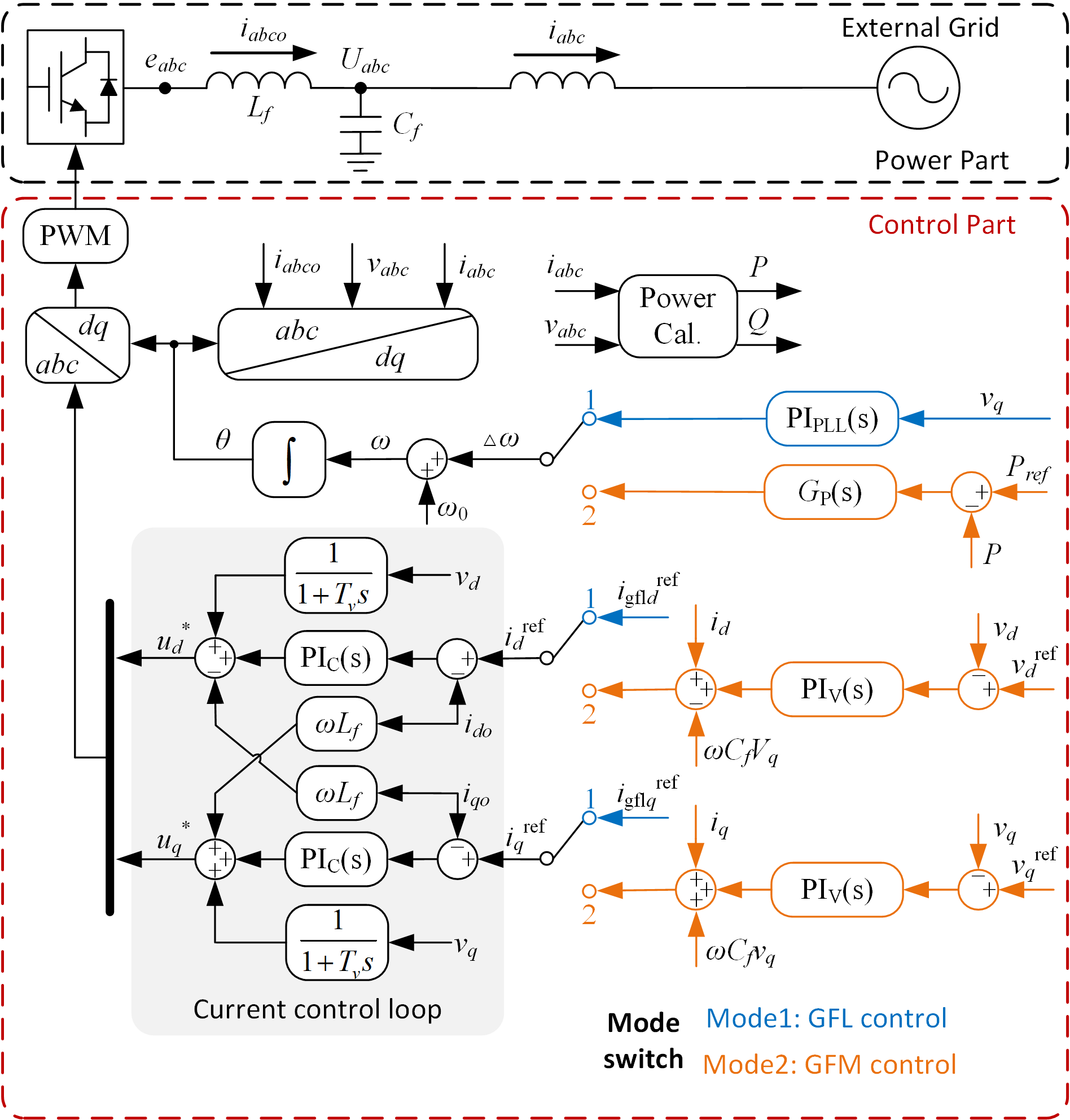}
	\vspace{-3mm}
	\caption{GFM and GFL control schemes of a converter connected to the grid} 
	\vspace{-0.4cm}
	\label{fig1}
\end{figure}
\begin{figure}
	\centering
	\includegraphics[width=1.5in]{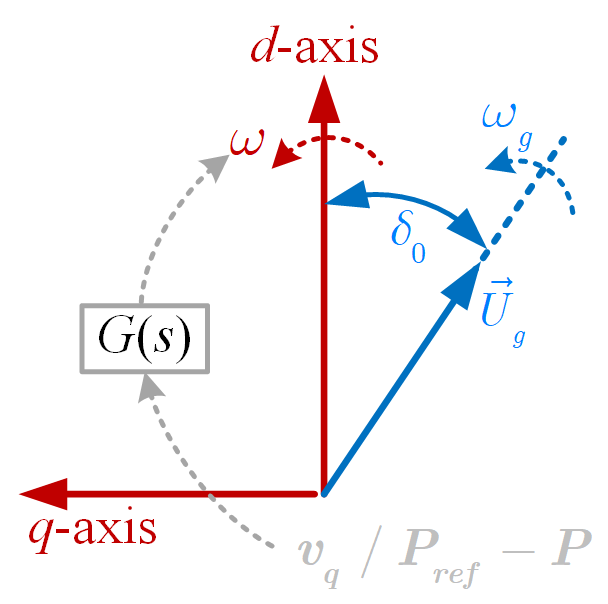}
	\vspace{-3mm}
	\caption{Vector diagram of a converter connected to the grid} 
	\vspace{-0.4cm}
	\label{fig2}
\end{figure}
 
A key difference between the GFL control and the GFM control lies in their synchronous signal. For a GFL converter, its synchronous signal is the voltage q-axis component $v_q$. The $i$-th GFL converter's synchronous control dynamic in small-signal can be expressed as
\vspace{0mm}
\begin{equation}\label{eq:PLL}
\Delta\theta_i=\frac{k_{P,i}s+k_{I,i}}{s^2} \Delta v_{q,i}\,,
\end{equation}
where, $\Delta$ represents the perturbations in small signal, $ {\rm PI}_{\rm PLL}(s)=k_{P,i}+k_{I,i}/s $ is the PI controller of the PLL.

Active power $P$ is the synchronous signal of a GFM converter and its synchronous control model is
\vspace{0mm}
\begin{equation}\label{eq:VSG}
\Delta\theta_i=-\frac{\omega_0}{J_is^2+D_is} \Delta P_{i}\,,
\end{equation}
where, $\omega_0$ is the rated angular frequency, $J_i$ and $D_i$ are the inertial and damping coefficients of the $i$-th converter, respectively.\textcolor{catalogueblue}{ The dynamics of large-capacity SGs are slower and can be regarded as rigid voltage sources. When considering the dynamics of small-capacity SGs, their models can also be approximated by \eqref{eq:VSG} and analyzed in the same way as GFM converters. This case is not separately considered in this paper.}

In addition, another difference between the GFM and GFL \textcolor{catalogueblue}{controls lies} in their different vector control objective, i.e. $dq$-axis voltage $v_{dq}$ for GFM and $dq$-axis current $i_{dq}$ for GFL. Generally, the bandwidth of the synchronous control is narrower than that of the current or voltage vector control loop. In this situation, the GFL converter's current and the GFM converter's voltage are approximated by their reference values in synchronous stability analysis, i.e. $i_d=i_{dref}={\rm constant}$, $i_q=i_{qref}={\rm constant}$ for the GFL converter and  $v_d=v_{dref}={\rm constant}$, $v_q=v_{qref}=0$ for the GFM converter. 

Therefore, the active power of GFM converters can be written as $\Delta P=v_d\Delta i_d$ and $v_d$ can be considered as the constant control gain. We take a unified formula to describe the synchronous dynamic of GFM and GFL converters, \textcolor{catalogueblue}{
\vspace{0mm}
\begin{equation}\label{eq:unified syn}
\Delta\theta_i=-\frac{T_{P,i}s+1}{T_{J,i}s^2+T_{D,i}s} \Delta M_{i}=:-G_i(s)\Delta M_i\,,
\end{equation}
where, for a GFL converter, $T_{P,i}=1/k_{I,i}, T_{J,i}=0, T_{D,i}=k_{P,i}/k_{I,i}, \Delta M_i=-\Delta v_{q,i}$. For a GFM converter, $T_{J,i}=J_i/v_{d,i}\omega_0, T_{D,i}=D_i/v_{d,i}\omega_0, T_{P,i}=0, \Delta M_i=\Delta i_{d,i}$.}

\begin{figure}
	\centering
	\includegraphics[width=3in]{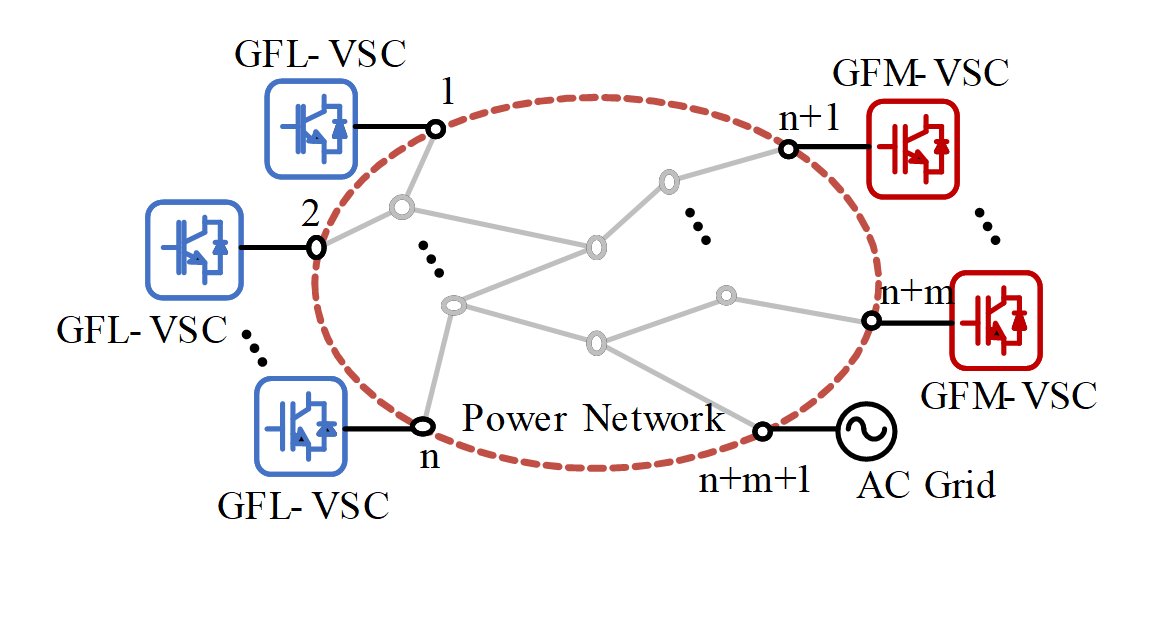}
	\vspace{-3mm}
	\caption{GFM and GFL hybrid power system} 
	\vspace{-0.4cm}
	\label{fig3}
\end{figure}

We consider a power system consisting of $n$ GFL converters and $m$ GFM converters, which is shown in Fig. \ref{fig3}. The small-signal model of the multi-converter can be written as
\vspace{0mm}
\begin{equation}\label{eq:multiconverter}
\underbrace{\begin{bmatrix}
    \Delta {\bm \theta}_{\rm gfl}  \\ \Delta {\bm \theta}_{\rm gfm}
\end{bmatrix}}_{=:\Delta {\bm \theta}}
= -{\bm G}_{\rm sw}(s)
\underbrace{
\begin{bmatrix}
    -\Delta {\bm V}^{\rm gfl}_{q}  \\ \Delta {\bm I}^{\rm gfm}_{d}
\end{bmatrix}}_{=:\Delta {\bm M}}\,,
\end{equation}
where, 
\[
\begin{aligned}
\Delta {\bm \theta}_{\rm gfl}&=\begin{bmatrix} \Delta \theta_1 & \dots & \Delta \theta_n\end{bmatrix}^\top, \\
\Delta {\bm \theta}_{\rm gfm}&=\begin{bmatrix} \Delta \theta_{n+1} & \dots & \Delta \theta_{n+m}\end{bmatrix}^\top,\\
{\bm G}_{\rm sw}(s)&={\rm diag}\left\{ G_{1}(s), \dots,  G_{n+m}(s) \right\},\\
\Delta {\bm V}^{\rm gfl}_{q}&=\begin{bmatrix}  \Delta v_{q,1} &\dots & \Delta v_{q,n} \end{bmatrix}^\top,\\
\Delta {\bm I}^{\rm gfm}_{d}&=\begin{bmatrix}  \Delta i_{d,n+1} &\dots &\Delta i_{d,n+m} \end{bmatrix}^\top    \,,
\end{aligned}
\]
and ${\rm diag}\left\{\cdot\right\}$ represents a diagonal matrix.

\subsection{Network model}
We denote the voltage and current vectors of converter nodes in global $xy$ frame (angular frequency is $\omega_0$) as $\Delta {\bm V}_{xy}=[\Delta {\bm V}^{\rm gfl \top}_{xy}\; \Delta {\bm V}^{\rm gfm \top}_{xy}]^\top$ and $\Delta {\bm I}_{xy}=[\Delta {\bm I}^{\rm gfl \top}_{xy}\; \Delta {\bm I}^{\rm gfm \top}_{xy}]^\top$ \textcolor{catalogueblue}{(composed of the voltage and current vectors of each converter in the $xy$ frame: $v_{x,i},v_{y,i}$ and $i_{x,i},i_{y,i}$)}, the network dynamics are given by
\vspace{0mm}
\begin{equation}\label{eq:Ygrid}
\begin{bmatrix}  \Delta{\bm I}^{\rm gfl}_{xy}\\ \Delta{\bm I}^{\rm gfm}_{xy}\end{bmatrix}=
\left[ {\bm Y}_{\rm net} \otimes  \gamma(s)\right]
\begin{bmatrix} \Delta{\bm V}^{\rm gfl}_{xy}\\ \Delta{\bm V}^{\rm gfm}_{xy}\end{bmatrix}\,,
\end{equation}
with
\vspace{-0mm}
\begin{equation}\label{eq:gamma}
\begin{aligned}
{\bm Y}_{\rm net}:=&\begin{bmatrix} {\bm Y}_1 \in \mathbb{R}^{n\times n} &  {\bm Y}_2 \in \mathbb{R}^{n\times m}  \\ {\bm Y}_3 \in \mathbb{R}^{m\times n} & {\bm Y}_4\in \mathbb{R}^{m\times m}  \end{bmatrix}, \\
\gamma(s):=&\begin{bmatrix} \tau+s/\omega_0 & -1 \\ 1 & \tau+s/\omega_0  \end{bmatrix}^{-1} \,,    
\end{aligned}
\end{equation}
where ${\bm Y}_{\rm net}$ is the reduced network admittance matrix and its $ij$-th element represents equivalent admittance between node $i$ and node $j$, $\tau$ is the line resistance to inductance ratio ($R/X$), $\otimes$ is the Kronecker product.

GFL converters' current and GFM converters' voltage serve as the inputs to the network, respectively. Then the network dynamics are reorganized into
\vspace{0mm}
\begin{equation}\label{eq:ZY}
\begin{bmatrix} \Delta{\bm V}^{\rm gfl}_{xy}\\ \Delta{\bm I}^{\rm gfm}_{xy}\end{bmatrix}=
{\bm G}_{\rm net}(s)
\begin{bmatrix} \Delta{\bm I}^{\rm gfl}_{xy}\\ \Delta{\bm V}^{\rm gfm}_{xy}\end{bmatrix} \,,
\end{equation}
with
\vspace{0mm}
\begin{equation}\label{eq:ZYdynamic}
{\bm G}_{\rm net}(s):= \begin{bmatrix} {\bm Y}^{-1}_1\otimes  \gamma^{-1}(s) &  -{\bm Y}_1^{-1}{\bm Y}_2\otimes {\bm I}_2 \\ {\bm Y}_3{\bm Y}_1^{-1}\otimes {\bm I}_2 & \left({\bm Y}_4-{\bm Y}_3{\bm Y}_1^{-1}{\bm Y}_2 \right)\otimes  \gamma(s) \end{bmatrix}
 \,,
\end{equation}
where ${\bm I}_n$ is a $n \times n$ identity matrix. The derivation is shown in Appendix \ref{sec:appendixA}. 

The relationship between a vector $C$ in the global $xy$ frame and the local dq frame is
\vspace{0mm}
\begin{equation}\label{eq:dqtoxy}
\begin{bmatrix}
    C_{d,i} \\ C_{q,i}
\end{bmatrix}=
\begin{bmatrix} {\rm cos}\theta_i & {\rm sin}\theta_i \\ -{\rm sin}\theta_i & {\rm cos}\theta_i \end{bmatrix}
\begin{bmatrix}
    C_{x,i} \\ C_{y,i}
\end{bmatrix}\,.
\end{equation}

Then we obtain $-v_{q,i}=\begin{bmatrix} {\rm sin}\theta_i & -{\rm cos}\theta_i \end{bmatrix}\begin{bmatrix} v_{x,i} & v_{y,i} \end{bmatrix}^\top$ and $i_{d,i}=\begin{bmatrix} {\rm cos}\theta_i & {\rm sin}\theta_i \end{bmatrix}\begin{bmatrix} i_{x,i} & i_{y,i} \end{bmatrix}^\top$. Linearize the above equation and transform the network outputs $\Delta{\bm V}^{\rm gfl}_{xy}$ and $\Delta{\bm I}^{\rm gfm}_{xy}$ into $-\Delta {\bm V}^{\rm gfl}_q$ and $\Delta {\bm I}^{\rm gfm}_d$. 
\vspace{0mm}
\begin{equation}\label{eq:output}
\Delta {\bm M}=
\underbrace{ \begin{bmatrix} {\bm T}^{\rm gfl}_{{\rm out}} &  {\bm 0} \\ {\bm 0} &{\bm T}^{\rm gfm}_{{\rm out}} \end{bmatrix}}_{=:{\bm T}_{\rm out}}
\begin{bmatrix} \Delta{\bm V}^{\rm gfl}_{xy}\\ \Delta{\bm I}^{\rm gfm}_{xy}\end{bmatrix}+
\begin{bmatrix} {\bm V}^{\rm gfl}_{d} & {\bm 0}\\ {\bm 0} & {\bm I}^{\rm gfm}_{q}\end{bmatrix}\Delta{\bm \theta}
  \,,
\end{equation}
where ${\bm T}^{\rm gfl}_{{\rm out}}$, ${\bm T}^{\rm gfm}_{{\rm out}}$, $ \bm V^{\rm gfl}_d$ and $\bm I^{\rm gfm}_q$ are all diagonal block matrices. ${\bm T}^{\rm gfl}_{{\rm out},i}=\begin{bmatrix} {\rm sin}\theta_i & -{\rm cos}\theta_i \end{bmatrix}$, ${\bm T}^{\rm gfm}_{{\rm out},i}=\begin{bmatrix}  {\rm cos}\theta_i & {\rm sin}\theta_i\end{bmatrix}$,  $ \bm V^{\rm gfl}_{d,i}=\begin{bmatrix}{\rm cos}\theta_i & {\rm sin}\theta_i\end{bmatrix}\begin{bmatrix} v_{x,i} & v_{y,i} \end{bmatrix}^\top=v_{d,i}$ and $ \bm I^{\rm gfm}_{q,i}=\begin{bmatrix}-{\rm sin}\theta_i & {\rm cos}\theta_i\end{bmatrix}\begin{bmatrix} i_{x,i} & i_{y,i} \end{bmatrix}^\top=i_{q,i}$. 

Because we ignore the current and voltage control dynamics of converters, i.e.  \textcolor{catalogueblue}{$\Delta i_{d}=\Delta i_{q}=0$ and $\Delta v_{d}=\Delta v_{q}=0$}, the linearized input of the network in global $xy$ frame can be expressed as
\vspace{0mm}
\begin{equation}\label{eq:input}
\begin{aligned}
\begin{bmatrix} \Delta{\bm I}^{\rm gfl}_{xy}\\ \Delta{\bm V}^{\rm gfm}_{xy}\end{bmatrix}& = {\bm T}_{\rm in}\begin{bmatrix} {\bm I}^{\rm gfl}_{dq} & {\bm 0}\\ {\bm 0} & {\bm V}^{\rm gfm}_{dq}\end{bmatrix}\Delta{\bm \theta}\,,
\end{aligned}
\end{equation}
where ${\bm T}_{\rm in}$, ${\bm V}^{\rm gfm}_{dq}$ and ${\bm I}^{\rm gfl}_{dq}$ are all the diagonal block matrix. ${\bm T}_{{\rm in},i}=\left[\begin{smallmatrix} -{\rm sin}\theta_i & -{\rm cos}\theta_i \\ {\rm cos}\theta_i & -{\rm sin}\theta_i \end{smallmatrix} \right]$, ${\bm V}^{\rm gfm}_{dq,i}=\begin{bmatrix} v_{d,i}&v_{q,i}  \end{bmatrix}^\top$ and ${\bm I}^{\rm gfl}_{dq,i}=\begin{bmatrix} i_{d,i}&i_{q,i}  \end{bmatrix}^\top$.

Combining \eqref{eq:ZY} to \eqref{eq:input}, the network model can be expressed with $ \Delta \bm \theta $ as input and $\Delta {\bm M}$ as output. 
\vspace{0mm}
\begin{equation}\label{eq:finalnetwork}
\begin{aligned}
\Delta {\bm M}&= \left ({\bm T}_{\rm out}  {\bm G}_{\rm net}(s) {\bm T}_{\rm in}  
\begin{bmatrix} {\bm I}^{\rm gfl}_{dq} & {\bm 0}\\ {\bm 0} & {\bm V}^{\rm gfm}_{dq}\end{bmatrix}+
\begin{bmatrix} {\bm V}^{\rm gfl}_{d} & {\bm 0}\\ {\bm 0} & {\bm I}^{\rm gfm}_{q}\end{bmatrix} \right)
\Delta \bm \theta \\
& =:\left({\bm K}^{\rm sy}_{\rm net}(s)+s{\bm K}^{\rm d}_{\rm net}(s) \right) \Delta \bm \theta
  \,,
\end{aligned}
\end{equation}
with\textcolor{catalogueblue}{
\vspace{0mm}
\begin{equation}\label{eq:netfenjie}
\begin{aligned}
{\bm K}^{\rm sy}_{\rm net}(s)&=
\begin{bmatrix} 
{\bm K}^{\rm sy}_{\rm gfl}+{\bm V}^{\rm gfl}_{d}& {\bm K}^{\rm sy}_{\rm lm}\\
{\bm K}^{\rm sy}_{\rm ml} &  \alpha(s){\bm K}^{\rm sy}_{\rm gfm}+{\bm I}^{\rm gfm}_{q}
\end{bmatrix}\\
&=:\begin{bmatrix} {\bm L}_{1} & {\bm L}_{2}\\ {\bm L}_{3} & {\bm L}_{4}(s) \end{bmatrix}\\
{\bm K}^{\rm d}_{\rm net}(s)&=
\begin{bmatrix} 
{\bm K}^{\rm d}_{\rm gfl}& \bm 0\\
\bm 0  &  \alpha(s){\bm K}^{\rm d}_{\rm gfm}
\end{bmatrix}
  \,,
\end{aligned}
\end{equation}
where, 
\begin{equation}\label{eq:elementofK}
 \begin{aligned}
{\bm K^{\rm sy}_{{\rm gfl},ij}}&=-I_jX_{ij}[\tau {\rm cos}(\theta_{ij}-\varphi_j)+{\rm sin}(\theta_{ij}-\varphi_j)],\\
{\bm K}^{\rm sy}_{{\rm lm},ij}&=-V_{j}b_{ij}{\rm cos}\theta_{ij},\\
{\bm K}^{\rm sy}_{{\rm ml},ij}&=V_{i}I_jc_{ij}{\rm sin}(\theta_{ij}-\varphi_j),\\
{\bm K^{\rm sy}_{{\rm gfm},ij}}&=V_{i}V_{j}Y_{ij}(\tau {\rm sin}\theta_{ij}+{\rm cos}\theta_{ij})\\
{\bm K^{\rm d}_{{\rm gfl},ij}}&=-I_j(X_{ij}/\omega_0) {\rm cos}(\theta_{ij}-\varphi_j),\\
{\bm K^{\rm d}_{{\rm gfm},ij}}&=V_{i}V_{j}Y_{ij} {\rm sin}\theta_{ij}/\omega_0,\\
\alpha(s)&=\frac{1}{(s/\omega_0)^2+2(\tau/\omega_0)s+\tau^2+1}
\,,
\end{aligned}
\end{equation}
and $\varphi_i={\rm arctan}(i_{q,i}/i_{d,i})$ is the power factor angle of the $i$-th GFL converter, $I_i=\sqrt{i^2_{d,i}+i^2_{q,i}}$ is the current magnitude of the $i$-th GFL converter, $V_i$ is the voltage magnitude of the $i$-th converter, $X_{ij}$, $Y_{ij}$, $b_{ij}$ and $c_{ij}$ are the elements of ${\bm Y}_1^{-1}$, ${\bm Y}_4-{\bm Y}_3{\bm Y}_1^{-1}{\bm Y}_2$, $-{\bm Y}^{-1}_1{\bm Y}_2$ and ${\bm Y}_3{\bm Y}^{-1}_1$ in ${\bm G}_{\rm net}(s)$ of \eqref{eq:ZYdynamic}, respectively. }

\subsection{Closed loop system model}
\begin{figure}
	\centering
	\includegraphics[width=2.8in]{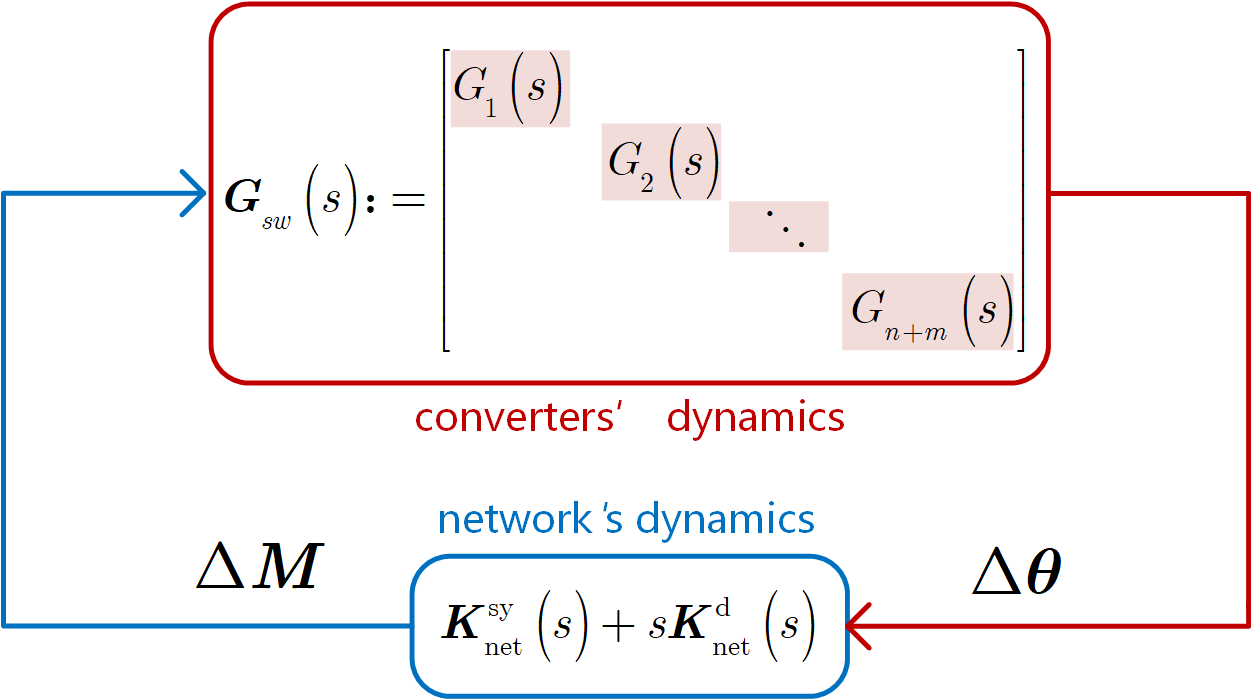}
	\vspace{-3mm}
	\caption{The equivalent control block diagram of the proposed model} 
	\vspace{-0.4cm}
	\label{fig4}
\end{figure}

As shown in Fig. \ref{fig4}, \eqref{eq:multiconverter} and \eqref{eq:finalnetwork} constitute the closed-loop dynamics of the multi-converter power system. The characteristic polynomial is following:
\begin{equation}\label{eq:closedloop}
\begin{aligned}
&{\rm det}\left[{ \bm I}_{n+m}+\bm G_{\rm sw}(s)\left({\bm K}^{\rm sy}_{\rm net}(s)+s{\bm K}^{\rm d}_{\rm net}(s) \right) \right]=0
 \,.    
\end{aligned}
\end{equation}

After linear transformation (left-multiply Eq.\eqref{eq:closedloop} by matrix ${\bm T}_Js^2+{\bm T}_Ds$), the characteristic polynomial is equivalent as 
\begin{equation}\label{eq:2citezheng}
\begin{aligned}
 {\rm det}\left[{\bm H}(s)s^2+{\bm D}(s)s+{\bm L}(s)\right]=0
 \,,    
\end{aligned}
\end{equation}
with,
\begin{equation}\label{eq:HDL}
\begin{aligned}
{\bm H}(s)&= {\bm T}_J+{\bm T}_P{\bm K}^{\rm d}_{\rm net}(s)\approx  {\bm T}_J=:
\begin{bmatrix} {\bm H}_{\rm gfl} & {\bm 0} \\ {\bm 0} & {\bm H}_{\rm gfm}    \end{bmatrix},\\
{\bm D}(s)&={\bm T}_D+{\bm T}_P{\bm K}^{\rm sy}_{\rm net}(s)+{\bm K}^{\rm d}_{\rm net}(s)=:\begin{bmatrix} {\bm D}_{\rm gfl} & {\bm 0} \\ {\bm 0} & {\bm D}_{\rm gfm}(s)   \end{bmatrix},\\
{\bm L}(s)&={\bm K}^{\rm sy}_{\rm net}(s)=\begin{bmatrix} {\bm L}_{1} & {\bm L}_{2}\\ {\bm L}_{3} & {\bm L}_{4}(s)   \end{bmatrix}\,,
\end{aligned}
\end{equation}
where, ${\bm T}_J={\rm diag}\left\{T_{J,i}\right\},\,{\bm T}_D={\rm diag}\left\{T_{D,i}\right\},\,{\bm T}_P={\rm diag}\left\{T_{P,i}\right\},\,i \in [1,n+m]$; ${\bm H}_{\rm gfl}={\rm diag}\left\{T_{J,i}\right\},i\in [1,n]$; ${\bm H}_{\rm gfm}={\rm diag}\left\{T_{J,i}\right\},i\in [n+1,n+m]$; ${\bm D}_{\rm gfl}={\rm diag}\left\{T_{P,i}\right\}({\bm K}^{\rm sy}_{\rm gfl}+{\bm V}^{\rm gfl}_0)+{\bm K}^{\rm d}_{\rm gfl},i\in [1,n]$; ${\bm D}_{\rm gfm}(s)={\rm diag}\left\{T_{D,i}\right\}+\alpha(s){\bm K}^{\rm d}_{\rm gfm},i\in[n+1,n+m]$. \textcolor{catalogueblue}{The reason ${\bm H}(s)$ can be approximated by ${\bm T}_J$ is: ${\bm T}_P{\bm K}^{\rm d}_{\rm net}(s)$ has nonzero values only in the top-left block corresponding to the GFL with each element magnitude less than $(k_{P,i}I_jX_{ij})/(k_{I,i}\omega_0)$, which is much smaller than the elements $1/k_{I,i},J_i/\omega_0v_{d,i}$ in ${\bm T}_J$ due to $(k_{P,i}I_jX_{ij})/\omega_0 \ll  1$.}

The form of \eqref{eq:2citezheng}  resembles the classic model used for synchronous stability analysis of multi-SG systems. ${\bm H}(s)$ is an equivalent inertial matrix, ${\bm D}(s)$ is an equivalent damping matrix, and ${\bm L}(s)$ is an equivalent synchronization coefficient matrix. Therefore, the physical insights and analytical methods developed for SGs can be extended to the multi-converter power system. ${\bm L}(s)$ and ${\bm D}(s)$ mainly provide equivalent ``{\em synchronous torque}" and ``{\em damping torque}", respectively. ${\bm H}(s)$ provides the inertia time constant, which mainly influences the synchronous dynamic response speed, and to some extent affects whether GFM and GFL converters are coupled on the same time scale. \textcolor{catalogueblue}{In addition, it is worth noting that if the network is modeled using the electromechanical approach, ${\bm K}^{\rm d}_{\rm net}(s)=0$. This completely neglects the negative damping torque caused by the network and renders the results inaccurate.} To verify the validity of the proposed model \eqref{eq:2citezheng}, the system shown in Fig.\ref{fig5} is used for testing, the line parameters can refer to \cite{ieee39}.

\begin{figure} 
	\centering
	\includegraphics[width=3in]{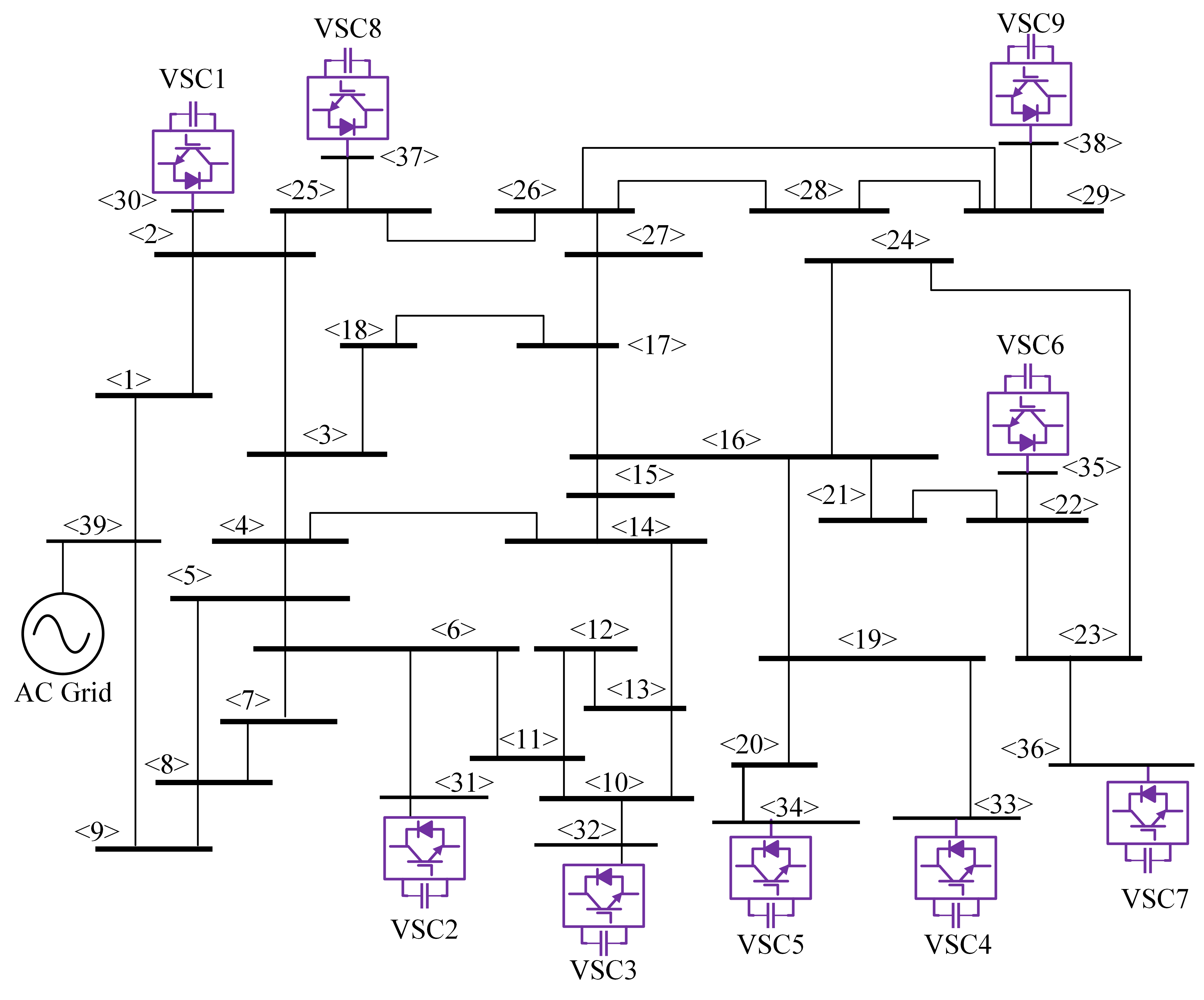}
	\vspace{-3mm}
	\caption{IEEE 39 buses system} 
	\vspace{-0.4cm}
	\label{fig5}
\end{figure}
\vspace{-0mm}
\begin{example}
In Fig.\ref{fig5}, VSC1-VSC6 are GFL converters, VSC7-VSC9 are GFM converters, the parameters are shown in Appendix \ref{sec:appendixE}. As shown in Fig.\ref{fig6}, we compare the eigenvalues of the proposed model ignoring voltage and current control, with those of the full-order model, under different damping coefficients of GFM converters. The top left displays all eigenvalues of both the full-order and proposed models, while the top right highlights the dominant modes, which are primarily dominated by synchronous dynamics. By reducing the damping coefficients of the GFM converters, the eigenvalues of the dominant modes are shown in the bottom left and bottom right of Fig.\ref{fig6}. The comparison exhibits close agreement between the proposed and full-order models. Similar trends in variation and small errors confirm the accuracy and validity of the proposed model. 
\end{example}
\vspace{-5mm}
\begin{figure} 
	\centering
	\includegraphics[width=2.8in]{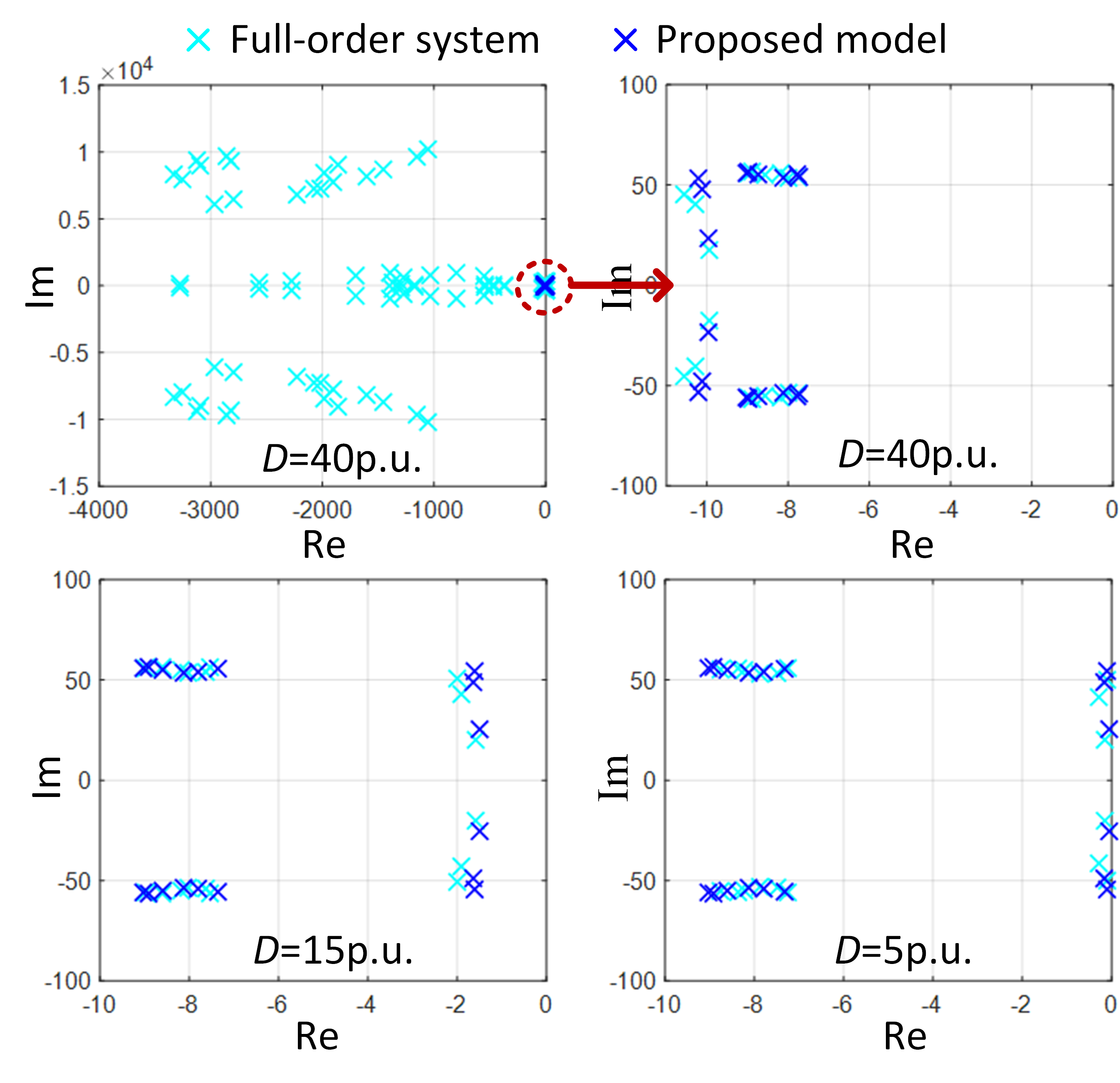}
	\vspace{-3mm}
	\caption{The eigenvalues of the proposed model and the full-order model} 
	\vspace{-0.4cm}
	\label{fig6}
\end{figure}
\section{Decoupling Analysis of GFM and GFL Converters} 
\subsection{The decoupling conditions of GFM and GFL converters}
\textcolor{catalogueblue}{
Observing~\eqref{eq:HDL}, we find that ${\bm H}(s)$ and ${\bm D}(s)$ are diagonal matrices, while only ${\bm L}(s)$ is a full matrix. Therefore, if the off-diagonal elements of ${\bm L}(s)$ have little impact on the overall matrix, the GFL and GFM converters can be analyzed separately. We will next introduce subspace perturbation theory to derive quantitative conditions for decoupling the GFL and GFM converters. }

\textcolor{catalogueblue}{
{\bf Davis-Kahan subspace perturbation theory~\cite{kato1966perturbation}:} Let $A={\rm diag}(A_1,A_2)\in \mathbb{C}$, and let $E={\rm antidiag}(E_1,E_2)\in \mathbb{C}$ be a perturbation (antidiag denotes off-diagonal block matrix). The eigenvalues of $A$ be partitioned into two disjoint sets $\lambda(A)=\lambda(A_1)\cup\lambda(A_2)$, with $\min|{\lambda(A_1)-\lambda(A_2)}|=\delta>0$. Let $V$ be the invariant subspaces corresponding to $\lambda(A)$, and let $W$ be the invariant subspace of $\lambda(A+E)$, the largest principal angle between $V$ and $W$ is
\begin{equation}\label{eq:subspace}
\begin{aligned}
\|\sin\Theta(V,W)\|_2 \leq \kappa \frac{\|E_1\|_2\|E_2\|_2}{\delta}\,,
\end{aligned}
\end{equation}
where, $\kappa$ is the condition number of the eigenvector matrix of $A$, $\|\|_2$ is the $\ell_2$ norm.}

\textcolor{catalogueblue}{If the subspace remains nearly unchanged ($\|\sin\Theta(V,W)\|_2 \leq\epsilon$, $\epsilon$ is a small constant), it implies that $E$ does not affect the eigenvalues of $A$, and $A+E$ can still be approximately decoupled as $A={\rm diag}(A_1,A_2)$.}

\begin{proposition}[{\bf The decoupling conditions}]\label{The decoupling condition}
\textcolor{catalogueblue}{The \eqref{eq:2citezheng} can be decoupled as two subsystem
\begin{equation}\label{eq:decoupled}
\begin{aligned}
\left\{\begin{matrix}
 {\rm det}[{\bm H}_{\rm gfl}s^2+{\bm D}_{\rm gfl}s+{\bm L}_1]=: {\rm det}[{\bm N}_{\rm gfl}(s)]=0\\
 {\rm det}[{\bm H}_{\rm gfm}s^2+{\bm D}_{\rm gfm}(s)s+{\bm L}_4(s)]=:{\rm det}[{\bm N}_{\rm gfm}(s)]=0
\end{matrix}\right.
\,,
\end{aligned}
\end{equation}
if \eqref{eq:decoupled condition} holds,
\begin{equation}\label{eq:decoupled condition}
\begin{aligned}
&\frac{\|{\bm L}_2\|_2\|{\bm L}_3\|_2}{\delta}\leq\epsilon,\\
\delta^2=&\left(\underset{\forall i,j}{\min}\left|\frac{J_i\omega^2}{\omega_0}-\lambda_j(L_4)\right|-\underset{\forall i,j}{\max}\left|\frac{\omega^2}{k_{I,i}}-\lambda_j(L_1)\right|\right)^2\\
&+\underset{\forall i,j}{\min}\left|\frac{D_i}{\omega_0}-\frac{k_{P,j}}{k_{I,j}}\right|^2\omega^2\,,\omega\in[\omega_{\min},\omega_{\max}]
\end{aligned}
\end{equation}
where, $\epsilon$ is a small constant, $\left[\omega_{\min},\omega_{\max} \right]$ represents the PLL bandwidth range of the GFL converters. For ideal decoupling, we should set $\epsilon=0$. However, in practice, setting $\epsilon=0.05$ is sufficient to ensure a decoupling accuracy. }

\textcolor{catalogueblue}{If a simpler parameter condition is desired, \eqref{eq:decoupled condition} can be further simplified, requiring only the design of a lower bound for the damping coefficient to satisfy \eqref{eq:simple condition} to ensure decoupling (more conservative but very simple),
\begin{equation}\label{eq:simple condition}
\begin{aligned}
D_{i}\geq \frac{\|{\bm L}_2\|_2\|{\bm L}_3\|_2\omega_0}{\epsilon\omega^2_{\min}}+\max\left(\frac{k_{P,j}}{k_{I,j}}\right)\omega_0, \forall i,j\,,
\end{aligned}
\end{equation}
where $\omega_{min}$ is the smallest PLL bandwidth of GFL converter. }
\end{proposition}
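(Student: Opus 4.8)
The plan is to treat the matrix pencil $\bm N(s):=\bm H(s)s^2+\bm D(s)s+\bm L(s)$ along the imaginary axis, $s=j\omega$, and to apply the Davis--Kahan bound \eqref{eq:subspace} with the natural block partition. Since $\bm H(s)\approx\bm T_J$ and $\bm D(s)$ are block diagonal by \eqref{eq:HDL}, the only inter-group coupling is carried by the off-diagonal blocks $\bm L_2,\bm L_3$ of $\bm L(s)$. First I would write $\bm N(j\omega)=\bm A+\bm E$, where $\bm A={\rm diag}(\bm A_1,\bm A_2)$ collects the GFL block $\bm A_1=-\omega^2\bm H_{\rm gfl}+j\omega\bm D_{\rm gfl}+\bm L_1$ and the GFM block $\bm A_2=-\omega^2\bm H_{\rm gfm}+j\omega\bm D_{\rm gfm}(j\omega)+\bm L_4(j\omega)$, and where $\bm E={\rm antidiag}(\bm L_2,\bm L_3)$ is the off-diagonal perturbation. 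The restriction to the PLL band $\omega\in[\omega_{\min},\omega_{\max}]$ enters here because this is the range over which the GFL and GFM modes can approach one another; outside it the two groups are separated by a large inherent gap and decoupling is automatic.

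Next I would invoke \eqref{eq:subspace}. Because the diagonal blocks of $\bm E$ vanish, the first-order perturbation of each group's invariant subspace cancels and the rotation is second order in the coupling, which is exactly why the bound carries the \emph{product} $\|\bm L_2\|_2\|\bm L_3\|_2$ rather than a single norm. Demanding that the subspaces remain aligned with the block structure, $\|\sin\Theta(V,W)\|_2\le\epsilon$, then reduces directly to $\|\bm L_2\|_2\|\bm L_3\|_2/\delta\le\epsilon$, i.e. the first line of \eqref{eq:decoupled condition}. Once the principal angle is $O(\epsilon)$, the invariant subspace of $\bm N(j\omega)$ splits, up to $O(\epsilon)$, into the GFL and GFM coordinate subspaces, so the roots of ${\rm det}\,\bm N(s)=0$ coincide to the same order with those of ${\rm det}\,\bm N_{\rm gfl}(s)=0$ and ${\rm det}\,\bm N_{\rm gfm}(s)=0$, which is the claimed factorization \eqref{eq:decoupled}.

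The substantive step is the explicit lower bound on the gap $\delta=\min_{i,j}|\lambda_i(\bm A_1)-\lambda_j(\bm A_2)|$. Each eigenvalue of $\bm A_1$ is a complex number whose real part is controlled by $-\omega^2 T_{J}$ (namely $-\omega^2/k_{I,i}$ for the GFL) shifted by an eigenvalue of $\bm L_1$, and whose imaginary part is $\omega T_D=\omega k_{P,i}/k_{I,i}$; analogously the eigenvalues of $\bm A_2$ have real part near $-J_i\omega^2/\omega_0$ shifted by $\lambda_j(\bm L_4)$ and imaginary part $\omega D_i/\omega_0$. With both clouds in the left half-plane and the high-inertia GFM cloud lying further from the origin, I would separate the distance between the two eigenvalue clouds into its real and imaginary parts, lower-bound the real-part separation by $\min_{i,j}|J_i\omega^2/\omega_0-\lambda_j(\bm L_4)|-\max_{i,j}|\omega^2/k_{I,i}-\lambda_j(\bm L_1)|$ (the worst alignment) and the imaginary-part separation by $\omega\min_{i,j}|D_i/\omega_0-k_{P,j}/k_{I,j}|$, which yields the displayed expression for $\delta^2$. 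The main obstacle is that $\bm H_{\rm gfl},\bm D_{\rm gfl}$ are diagonal while $\bm L_1$ is full, so the eigenvalues of $\bm A_1$ are not literally the sums of the respective entries; to make the bound rigorous I would localize the spectra of $\bm A_1,\bm A_2$ by a Gershgorin/Bauer--Fike argument (or invoke near-uniform within-group parameters), which justifies replacing the true eigenvalues by the diagonal-plus-$\lambda(\bm L)$ approximations used above.

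Finally, for the simplified criterion \eqref{eq:simple condition} I would discard the nonnegative real-part contribution to $\delta^2$, keeping only the damping term, so that $\delta\ge\omega\min_{i,j}|D_i/\omega_0-k_{P,j}/k_{I,j}|\ge\omega_{\min}\bigl(\min_i D_i/\omega_0-\max_j k_{P,j}/k_{I,j}\bigr)$ whenever $D_i/\omega_0$ exceeds the GFL ratios. Imposing $\delta\ge\|\bm L_2\|_2\|\bm L_3\|_2/\epsilon$ and solving the resulting linear inequality for $D_i$ gives exactly \eqref{eq:simple condition}; the bound is more conservative precisely because it throws away the real-axis separation, but it collapses the decoupling guarantee into a single explicit lower bound on the GFM damping.
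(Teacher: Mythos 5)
Your proposal is correct and takes essentially the same approach as the paper's own proof in Appendix~\ref{sec:appendixB}: you apply the Davis--Kahan bound to $A={\rm diag}({\bm N}_{\rm gfl}(j\omega),{\bm N}_{\rm gfm}(j\omega))$ with perturbation $E={\rm antidiag}({\bm L}_2,{\bm L}_3)$, approximate each block's spectrum by its diagonal inertia/damping entries shifted by $\lambda({\bm L}_1)$ or $\lambda({\bm L}_4)$, and obtain \eqref{eq:simple condition} by discarding the real-part separation and solving for $D_i$ at the worst-case frequency $\omega_{\min}$. The minor differences are cosmetic or in your favor: combining the real- and imaginary-part gaps in quadrature reproduces the stated $\delta^2$ of \eqref{eq:decoupled condition} more directly than the paper's reverse-triangle-inequality bound on eigenvalue moduli, and you explicitly flag the spectral-localization step (diagonal ${\bm H},{\bm D}$ versus full ${\bm L}_1$, fixable by Gershgorin/Bauer--Fike) that the paper performs silently.
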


\begin{proof}
\textcolor{catalogueblue}{See Appendix~\ref{sec:appendixB}}
\end{proof}

\begin{figure}
	\centering
	\includegraphics[width=2.5in]{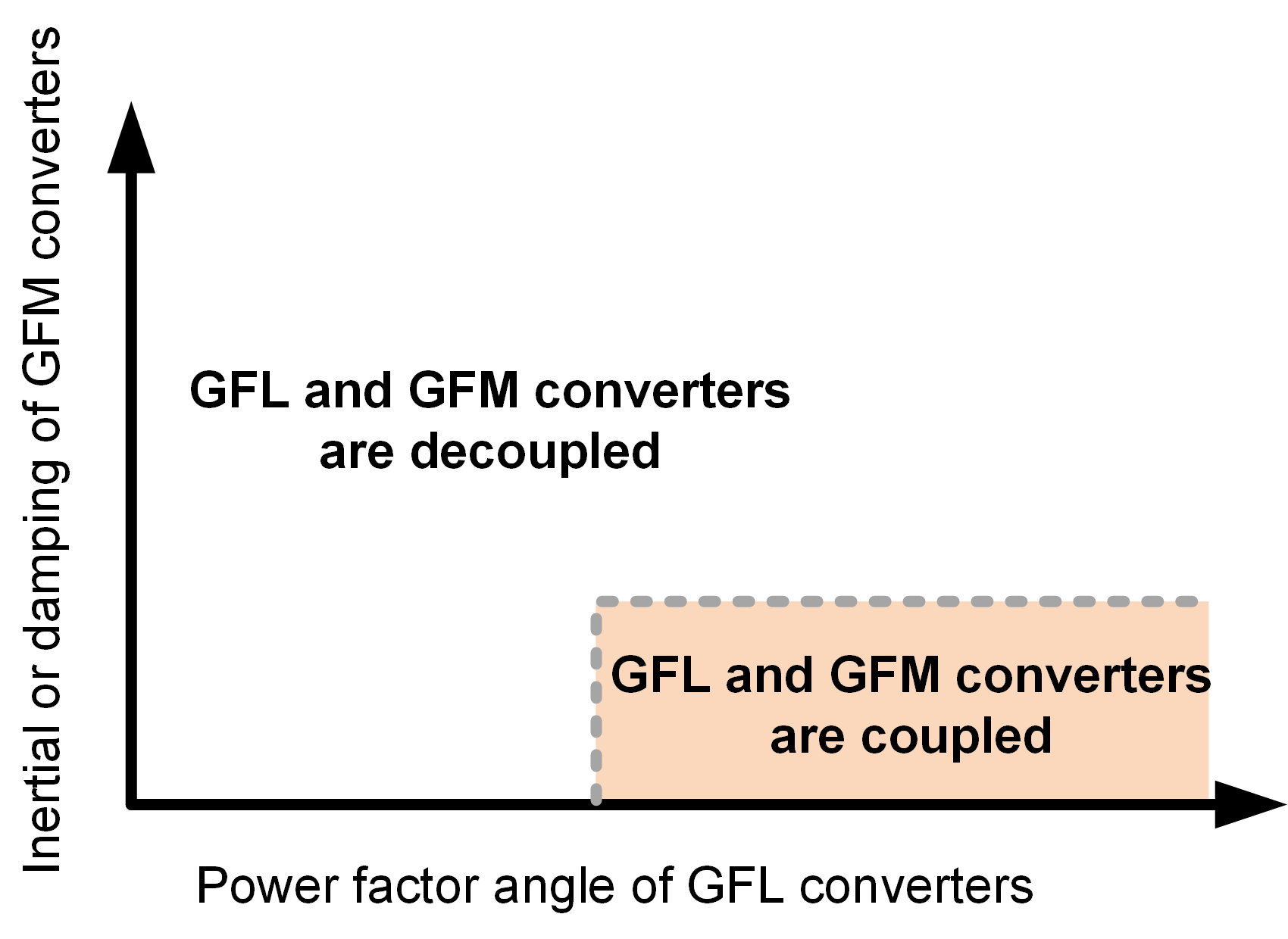}
	\vspace{-3mm}
	\caption{The decoupling conditions of GFL and GFM converters} 
	\vspace{-0.4cm}
	\label{fig7}
\end{figure}
\textcolor{catalogueblue}{Based on \eqref{eq:simple condition}, we provide an analytical lower bound for the GFM damping that ensures decoupling between GFM and GFL converters, which can be obtained through scalar calculations without frequency sweeping. This offers a practical reference for designing GFM–GFL decoupling. If a less conservative design is desired, \eqref{eq:decoupled condition} can be used, still without requiring complex calculations: it only requires calculating the eigenvalues of a static matrix, and the frequency-dependent terms involved are all one-dimensional scalar quantities. }

\textcolor{catalogueblue}{ $\delta$ represents the degree of timescale coupling between the GFM and GFL converters — a smaller $\delta$ indicates that their timescales are closer. Large inertia and damping cause the GFM’s timescale to be much slower than that of the GFL, thereby ensuring a large $\delta$ and effectively decoupling the GFM from the GFL.}

\textcolor{catalogueblue}{ $\|{\bm L}_2\|_2\|{\bm L}_3\|_2$ represents the coupling strength of the off-diagonal blocks in the synchronizing coefficient matrix ${\bm L}(s)$; a larger $\|{\bm L}_2\|_2\|{\bm L}_3\|_2$ corresponds to stronger coupling. Because the angle differences between the converters are small, i.e. $\theta_{ij}\approx 0$, if the power factor angles of GFL converters satisfy $\varphi_j \approx 0$, then $\sin(\theta_{ij}-\varphi_j)\approx 0$, ${\bm L}_{2,ij}=V_iI_jc_{ij}\sin(\theta_{ij}-\varphi_j)\approx 0$ and $\|{\bm L}_2\|_2\|{\bm L}_3\|_2\approx0$. Therefore, small power factor angle  $\varphi_j$ of GFL ensures decoupling, even if the timescales of the GFM and GFL are coupled.  }

 Thus, GFL photovoltaics and wind turbines operating at unity power factor will not couple with GFM converters in terms of synchronous dynamics. Typically, Static Var Generators (SVGs), which function as reactive power and voltage regulators, will strongly couple with low-inertia GFM converters. In the future, if GFL photovoltaics and wind turbines are given voltage regulation capabilities, they may also operate at non-unity power factor and potentially couple with low-inertia GFM converters.

To verify the validity of the decoupling conditions, we compare the decoupled system in \eqref{eq:decoupled} with the original system in \eqref{eq:2citezheng} under different parameters, based on the system shown in Fig. \ref{fig5}.
\begin{example}
Based on Example 1, we set all GFM converters have fixed $J/D=10$. The bode plots of $\alpha(s)$ with different $R/X$ ratio are shown in Fig. \ref{fig8}. \textcolor{catalogueblue}{ When within $10$Hz and $R/X$ is small, we assume $\alpha(s)\approx1$, then the elements numerical of ${\bm L}(s)$ with different GFL converters' power factor angles $\varphi$ are shown in Fig. \ref{fig9}. If the frequency exceeds 10 Hz, substitute the value of $\alpha(s)$ and calculate the magnitude of the elements in ${\bm L}(s)$.} Drawing from the physics principles of SGs, the values of ${\bm L}(s)$ reflect the converters' ability to synchronize with each other. 

 Fig. \ref{fig9} (a) with zero power factor angles demonstrates: 
\begin{enumerate}
    \item \textcolor{catalogueblue}{The off-diagonal elements of ${\bm L}_1$ are nearly zero: GFL converters with $\varphi=0$ lack the ability to synchronize with each other.}
    \item \textcolor{catalogueblue}{The elements of ${\bm L}_2$ exhibit relatively large magnitudes: GFL converters with $\varphi=0$ only synchronize by following GFM converters.}
    \item \textcolor{catalogueblue}{The elements of ${\bm L}_3$ are nearly zero: GFL converters with $\varphi=0$ do not affect the synchronization of GFM.}
    \item \textcolor{catalogueblue}{The elements of ${\bm L}_4$ exhibit large magnitudes: GFM converters can synchronize with each other internally.}
\end{enumerate}
These conclusions explain why PLL is a following control and PSC is a forming control from the synchronization perspective; GFM converters can provide equivalent ``synchronous torque", whereas GFL converters with $\varphi=0$ cannot.

But when power factor angles $\varphi=-0.5{\rm rad}$ for GFL converters, Fig. \ref{fig9} (b) shows that the values of ${\bm L}_2$ and ${\bm L}_3$ are similar in magnitude. This demonstrates that, from the synchronization perspective, GFL converters are no longer purely passive followers; they generate equivalent ``synchronous torque", both internally and in relation to GFM converters. Fig. \ref{fig9} illustrates that the power factor angles of GFL converters are key parameters influencing the coupling between GFL and GFM converters by altering the values of ${\bm L}_3$.

\begin{figure}
	\centering
	\includegraphics[width=2.5in]{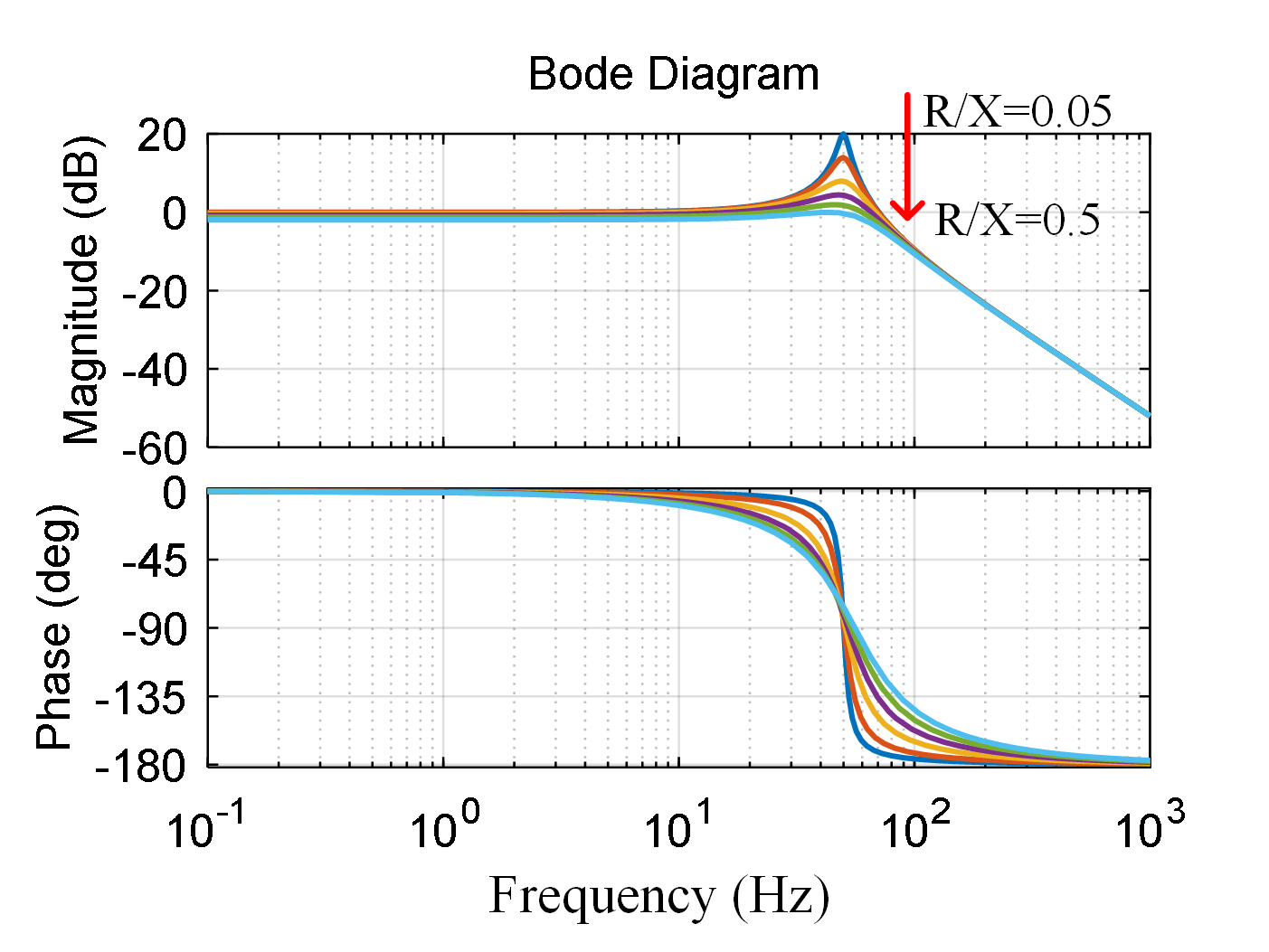}
	\vspace{-3mm}
	\caption{The bode plots of $\alpha(s)$ with different $R/X$ ratio} 
	\vspace{-0.4cm}
	\label{fig8}
\end{figure}

Furthermore, the comparison of eigenvalues between \eqref{eq:decoupled} and \eqref{eq:closedloop}, considering varying GFL converters' power factor angles and GFM converters'  inertial $J$, are shown in Fig. \ref{fig10}. When $J$ is large ($=10$p.u.) or $\varphi$ is small ($=0$), the eigenvalues of the decoupled system closely align with those of the original system, as shown in Fig. \ref{fig10} (a), (c), and (d). The decoupling conditions break down only when $J$ is small ($=1$p.u.) and $|\varphi|$ is large ($\varphi=-0.5{\rm rad}$)  simultaneously, resulting in a significant eigenvalues deviation between the decoupled and original systems, as shown in Fig. \ref{fig10} (b). 

\textcolor{catalogueblue}{In order to verify the validity of the decoupling conditions across a broader operating range, Fig. \ref{fig11} presents 3D plots of the maximum eigenvalue error and $\frac{\|{\bm L}_2\|_2\|{\bm L}_3\|_2}{\delta}$ ($\omega=58$rad, which is PLL bandwidth and oscillation frequency) under varying inertia of GFM converters and power factor angles of GFL converters. It is evident that both $\frac{\|{\bm L}_2\|_2\|{\bm L}_3\|_2}{\delta}$ and the eigenvalues error are relatively large only when both $|\varphi|$ and $J$ are sufficiently large. Otherwise, the eigenvalue error remains very small, confirming the validity of decoupling conditions.}
\begin{figure}
	\centering
	\includegraphics[width=3in]{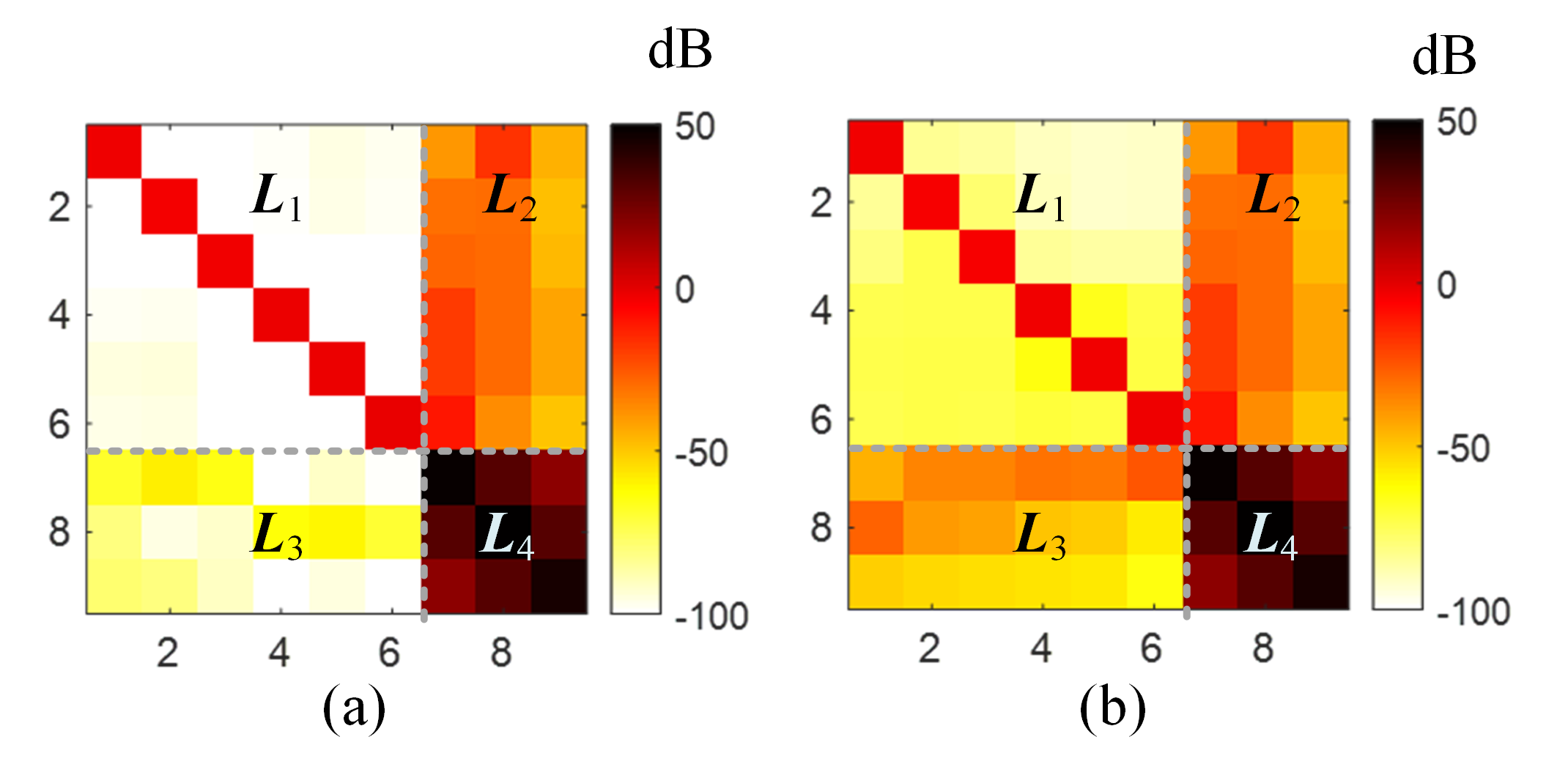}
	\vspace{-3mm}
	\caption{The elements of ${\bm L}(s)$ within 10Hz. (a) $\varphi=0$. (b) $\varphi=-0.5{\rm rad}$} 
	\vspace{-0.4cm}
	\label{fig9}
\end{figure}
\begin{figure}
	\centering
	\includegraphics[width=2.8in]{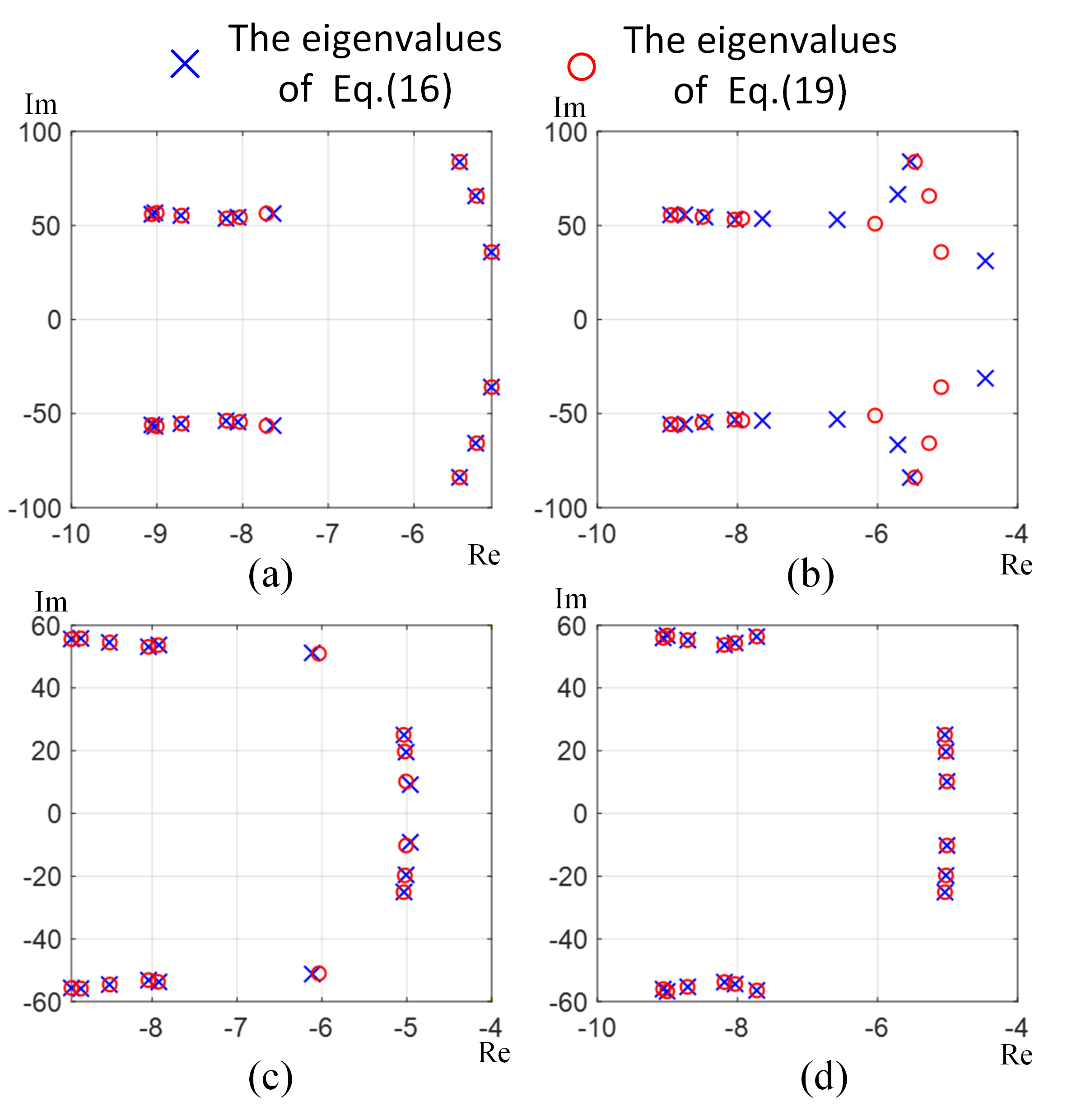}
	\vspace{-3mm}
	\caption{The eigenvalues of decoupled system and original system. (a) $\varphi=0, J=1$p.u.. (b) $\varphi=-0.5{\rm rad}, J=1$p.u.. (c) $\varphi=-0.5{\rm rad}, J=10$p.u.. (d) $\varphi=0, J=10$p.u..   } 
	\vspace{-0.4cm}
	\label{fig10}
\end{figure}
\begin{figure}
	\centering
	\includegraphics[width=3in]{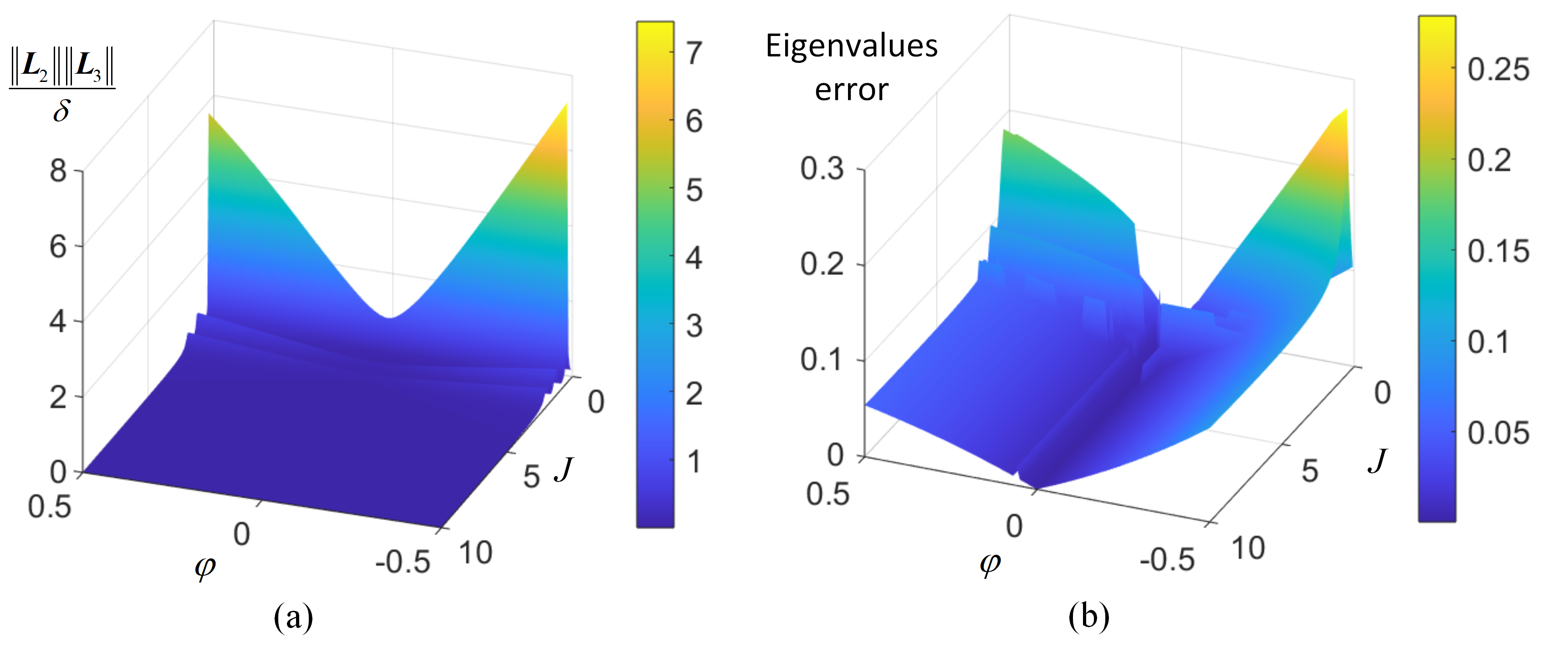}
	\vspace{-3mm}
	\caption{\textcolor{catalogueblue}{3D plots of the maximum eigenvalue error and $\frac{\|{\bm L}_2\|_2\|{\bm L}_3\|_2}{\delta}$. (a) $\frac{\|{\bm L}_2\|_2\|{\bm L}_3\|_2}{\delta}$. (b) the maximum eigenvalue error }} 
	\vspace{-0.4cm}
	\label{fig11}
\end{figure}
\end{example}

\subsection{The small-signal synchronous stability of decoupled systems}
For the decoupled systems of Eq. \eqref{eq:decoupled}, the small-signal synchronous stability mechanism can be understood as the ability of ${\bm D}_{\rm gfl}$ and ${\bm D}_{\rm gfm}(s)$ to provide positive damping.

\begin{proposition}[{\bf The small-signal stability conditions of decoupled systems}]\label{stability condition}
\textcolor{catalogueblue}{When \eqref{eq:decoupled condition} holds, the system shown in \eqref{eq:decoupled} is stable if
\begin{equation}\label{eq:static stability}
\begin{aligned}
\frac{k_{P,i}v_{d,i}}{k_{I,i}}&>-\underline{\lambda}({\bm K}^{\rm d}_{\rm gfl}),\forall i\,,\\
D_i&>-\Im[\alpha^*(j\omega)]\underline{\lambda}\left(\frac{{\bm K}^{\rm sy}_{\rm gfm}}{\omega}+j{\bm K}^{\rm d}_{\rm gfm}\right),\forall i
\,,
\end{aligned}
\end{equation}
where, $\underline{\lambda}(\cdot)$ represents minimum eigenvalue, $\forall  \omega \in [\underline{\omega}_M,\overline{\omega} _M]$ corresponds to the imaginary parts range of all solutions of ${\rm det}[{\bm N}_{\rm gfm}(s)]=0$, $\alpha^*(j\omega)=\frac{\alpha(j\omega)-\alpha(-j\omega)}{2}$, $\mathfrak{I}(\cdot)$ represents the imaginary part.}
\end{proposition}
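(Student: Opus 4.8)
The plan is to certify stability of each decoupled block by excluding imaginary-axis roots of $\det[{\bm N}_{\rm gfl}(s)]=0$ and $\det[{\bm N}_{\rm gfm}(s)]=0$, and then closing the argument with a continuity/root-locus step: each subsystem is Hurwitz for sufficiently large converter damping, and a mode can only enter the right half-plane by crossing the imaginary axis, so it suffices to show that no crossing is possible while \eqref{eq:static stability} holds. The common tool is to suppose a crossing at $s=j\omega$ with a null vector $v\neq{\bm 0}$, i.e. ${\bm N}(j\omega)v={\bm 0}$, left-multiply by $v^{*}$ to obtain the scalar identity $v^{*}{\bm N}(j\omega)v=0$, and read off its imaginary part. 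That imaginary part is exactly the net damping-torque balance, and each inequality in \eqref{eq:static stability} is engineered to keep it strictly positive, contradicting the existence of the crossing.

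For the GFL block, since $T_{J,i}=0$ gives ${\bm H}_{\rm gfl}={\bm 0}$, the pencil ${\bm N}_{\rm gfl}(s)={\bm D}_{\rm gfl}s+{\bm L}_1$ is first order and its roots are the negatives of the generalized eigenvalues of $({\bm L}_1,{\bm D}_{\rm gfl})$. From $v^{*}(j\omega{\bm D}_{\rm gfl}+{\bm L}_1)v=0$ (equivalently a Lyapunov argument weighted by ${\bm D}_{\rm gfl}$), all roots lie in the open left half-plane once the Hermitian part of the total damping matrix ${\bm D}_{\rm gfl}$ is positive definite, the synchronizing matrix ${\bm L}_1$ being positive owing to the following action of the PLL. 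Writing ${\bm D}_{\rm gfl}$ as its diagonal converter contribution, whose $i$-th entry reduces to $k_{P,i}v_{d,i}/k_{I,i}$, plus the network damping ${\bm K}^{\rm d}_{\rm gfl}$, Weyl's inequality gives $\underline{\lambda}(\tfrac{1}{2}({\bm D}_{\rm gfl}+{\bm D}_{\rm gfl}^{\top}))\geq\min_i k_{P,i}v_{d,i}/k_{I,i}+\underline{\lambda}({\bm K}^{\rm d}_{\rm gfl})$; demanding the right-hand side to be positive for each $i$ reproduces the first inequality of \eqref{eq:static stability} and encodes that the PLL proportional damping dominates the worst-case negative network damping.

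For the GFM block the matrix ${\bm N}_{\rm gfm}(s)={\bm H}_{\rm gfm}s^{2}+{\bm D}_{\rm gfm}(s)s+{\bm L}_4(s)$ is second order and frequency dependent through $\alpha(s)$. Evaluating $v^{*}{\bm N}_{\rm gfm}(j\omega)v=0$ and isolating the imaginary part, the inertia term $-\omega^{2}v^{*}{\bm H}_{\rm gfm}v$ and the diagonal block ${\bm I}^{\rm gfm}_q$ are real and drop out, the diagonal damping contributes $+\omega\,v^{*}\mathrm{diag}\{D_i/(v_{d,i}\omega_0)\}v$, and the two network blocks $\alpha(j\omega){\bm K}^{\rm sy}_{\rm gfm}$ and $j\omega\alpha(j\omega){\bm K}^{\rm d}_{\rm gfm}$ enter this imaginary part only through $\Im[\alpha(j\omega)]=\Im[\alpha^{*}(j\omega)]$ --- this is the mechanism by which the network synchronizing coupling, via the phase lag of the line dynamics $\alpha$, leaks into the damping balance as negative damping. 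Invoking the small-angle structure $\theta_{ij}\approx0$ that makes ${\bm K}^{\rm sy}_{\rm gfm}$ symmetric and ${\bm K}^{\rm d}_{\rm gfm}$ antisymmetric, the combination ${\bm K}^{\rm sy}_{\rm gfm}/\omega+j{\bm K}^{\rm d}_{\rm gfm}$ is Hermitian, so its quadratic form is real and bounded by its eigenvalues. The damping balance then becomes a Hermitian-form positivity requirement, and bounding it over all null vectors and over the crossing band $\omega\in[\underline{\omega}_M,\overline{\omega}_M]$ --- the imaginary parts of the roots of $\det[{\bm N}_{\rm gfm}(s)]=0$ --- yields the second inequality of \eqref{eq:static stability}.

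The hard part will be the GFM bookkeeping around $\alpha(j\omega)$: one must cleanly separate its real part, which merely reshapes the synchronizing term and stays out of the damping balance, from its imaginary part $\Im[\alpha^{*}(j\omega)]$, which is the genuine network damping; verify Hermiticity of ${\bm K}^{\rm sy}_{\rm gfm}/\omega+j{\bm K}^{\rm d}_{\rm gfm}$ under the small-angle approximation; and, most delicately, track the sign of $\Im[\alpha^{*}(j\omega)]$ together with the congruence used in the Rayleigh/Weyl step so that the extremal eigenvalue appearing in the bound is the correct one and the inequality points in the right direction. A secondary point is to justify that all candidate crossing frequencies indeed lie in $[\underline{\omega}_M,\overline{\omega}_M]$ and that the large-damping baseline is stable, so that ruling out crossings genuinely certifies asymptotic stability rather than mere marginality.
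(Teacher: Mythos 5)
Your overall route---supposing an imaginary-axis root of $\det[{\bm N}(s)]=0$ with null vector $v$, reading the damping balance off $\Im\,[v^{H}{\bm N}(j\omega)v]=0$, and closing by a continuity/homotopy argument from a heavily damped stable configuration---is a legitimate alternative to the paper's proof, which instead rewrites each decoupled block as $\det\{{\bm I}+[{\bm D}s+{\bm L}]{\bm H}^{-1}/s^{2}\}=0$ and applies the generalized Nyquist criterion: since ${\bm H}^{-1}/s^{2}$ has eigenvalue phase $-180^{\circ}$, stability is reduced to positive definiteness of the Hermitian part of ${\bm D}+{\bm L}/s$ at candidate crossing frequencies, then bounded by eigenvalue interlacing (Weyl) exactly as you do. Your GFM half is, in substance, the paper's argument in quadratic-form language: the same splitting into the Hermitian ${\rm diag}\{T_{D,i}\}$, the real-symmetric ${\bm K}^{\rm sy}_{\rm gfm}$ entering only through $\Im[\alpha^{*}(j\omega)]$, and the real-antisymmetric ${\bm K}^{\rm d}_{\rm gfm}$ made Hermitian by the factor $j$, followed by a Rayleigh bound over $\omega\in[\underline{\omega}_M,\overline{\omega}_M]$; your worry about the sign of $\Im[\alpha^{*}(j\omega)]$ selecting the correct extremal eigenvalue is warranted (the paper glosses over it).

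The genuine gap is in the GFL half: you set $T_{J,i}=0$, hence ${\bm H}_{\rm gfl}={\bm 0}$, and analyze the first-order pencil ${\bm D}_{\rm gfl}s+{\bm L}_1$. This follows a typo in the paper's parameter list but contradicts the PLL model \eqref{eq:PLL}: matching $\frac{k_{P,i}s+k_{I,i}}{s^{2}}$ to the unified form \eqref{eq:unified syn} forces $T_{P,i}=k_{P,i}/k_{I,i}$, $T_{J,i}=1/k_{I,i}$, $T_{D,i}=0$, so the GFL block is genuinely second order, ${\bm N}_{\rm gfl}(s)={\rm diag}\{1/k_{I,i}\}s^{2}+{\bm D}_{\rm gfl}s+{\bm L}_1$. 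This is what the paper's own proof uses (it manipulates ${\bm H}^{-1}_{\rm gfl}/s^{2}$, which is meaningless if ${\bm H}_{\rm gfl}={\bm 0}$, and Appendix B carries the $\omega^{2}/k_{I,i}$ inertia terms), and it is the source of the oscillatory PLL modes the proposition is about: the first-order treatment discards half of the GFL spectrum, precisely the roots that can cross into the right half-plane. Your proposal is also internally inconsistent, since the diagonal damping $k_{P,i}v_{d,i}/k_{I,i}$ you quote comes from $T_{P,i}=k_{P,i}/k_{I,i}$, which is incompatible with $T_{J,i}=0$; and the sketched Lyapunov claim for the first-order pencil (positive Hermitian parts of ${\bm D}_{\rm gfl}$ and ${\bm L}_1$ imply left-half-plane generalized eigenvalues) is not justified for complex non-normal matrices as stated. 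The repair stays entirely within your framework: run the same null-vector computation you use for the GFM block with ${\bm H}_{\rm gfl}={\rm diag}\{1/k_{I,i}\}\succ 0$; the inertia term and ${\bm L}_1\approx{\bm V}^{\rm gfl}_{d}$ (real, by the decoupling hypothesis ${\bm K}^{\rm sy}_{\rm gfl}\approx{\bm 0}$) drop out of the imaginary part, which reduces to $\omega\,v^{H}\bigl[{\rm diag}\{k_{P,i}v_{d,i}/k_{I,i}\}+{\bm K}^{\rm d}_{\rm gfl}\bigr]v$, and Weyl's inequality then yields exactly the first condition of \eqref{eq:static stability}; the nonsingular leading coefficient also makes your homotopy step sound, since no roots can escape from or return through infinity as the damping is varied.
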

\begin{proof}
\textcolor{catalogueblue}{See Appendix~\ref{sec:appendixC}.}
\end{proof}

\textcolor{catalogueblue}{Under the decoupled condition between the GFM and GFL converters, \eqref{eq:static stability} directly provides the converter parameter conditions required to ensure small-signal stability of the system: the converter’s positive damping is greater than the network’s negative damping. In contrast to GNC-based methods, which intertwine network and converter parameters and therefore do not yield analytical results, our proposed analytical lower bounds of the controller parameters can be obtained directly by calculating constant matrices related only to the network power flow information. Furthermore, we define $d_{\rm gfl}=\min\left(\frac{k_{P,i}v_{d,i}}{k_{I,i}}\right)+\underline{\lambda}({\bm K}^{\rm d}_{\rm gfl})$ and $d_{\rm gfm}=\min\left(D_i\right)+\Im[\alpha^*(j\omega)]\underline{\lambda}\left(\frac{{\bm K}^{\rm sy}_{\rm gfm}}{\omega}+j{\bm K}^{\rm d}_{\rm gfm}\right)$ as the margin indicators for the GFL and GFM subsystems, respectively. }

The bode plot of $\alpha(s)$ with different $\tau$ as shown in Fig.\ref{fig8}. It shows that the phase lag increases with $\tau$ and $\omega$. Therefore, if the GFM subsystem exhibits low oscillation frequency $\omega$ or network line has small $\tau$ such that $\mathfrak{I}[\alpha^*(j\omega)] \approx 0$, the GFM subsystem will be stable provided that $D_i>0, \forall i \in[n+1,n+m]$.

\begin{remark} ({\bf gSCR and network negative damping}) \label{gSCR} \textbf{gSCR} is defined as the minimum eigenvalue of network admittance matrix \cite{Huanhai:gSCR}. When the phase difference between the GFL converters are neglected and all GFL converters power factor angles are assumed to be $0$, $- \omega_0{\bm K}^{\rm d}_{\rm gfl} $ becomes a network impedance matrix. Its maximum eigenvalue is $\overline{\lambda}(- \omega_0{\bm K}^{\rm d}_{\rm gfl})=-\omega_0\underline{\lambda}({\bm K}^{\rm d}_{\rm gfl})=\frac{1}{\textbf{  gSCR} }$. Therefore, \textbf{gSCR} can qualify the negative damping of the network and $\frac{k_{P,i}v_{d,i}}{k_{I,i}} $ represents the positive damping of the GFL converter. When GFL converter positive damping is lager than network negative damping, i.e. the critical SCR of GFL converter is less than the  \textbf{gSCR}, GFL subsystem is stable. Therefore, the criterion based on $d_{\rm gfl}$ aligns with the criterion based on gSCR under this situation. 
\end{remark}

\begin{example}
Based on the system shown in Fig.\ref{fig5}, we set the constant $J=6p.u.$ and $\varphi=0$ to meet the proposed decoupled conditions. Then we adjust $k_{P,i}$ of the GFL converters and $D_{i}$ of the GFM converters to validate the stability conditions. The curves of $d_{\rm gfl},d_{\rm gfm}$ and the smallest damping ratio of the eigenvalues of two subsystems with different $k_{P,i},D_i$ are shown in Fig.\ref{fig12}. The results indicate that the proposed indexes align closely with the eigenvalue damping ratio, confirming accuracy of the proposed indexes in assessing the stability of the decoupled system.
\end{example}
\begin{figure}
	\centering
	\includegraphics[width=3in]{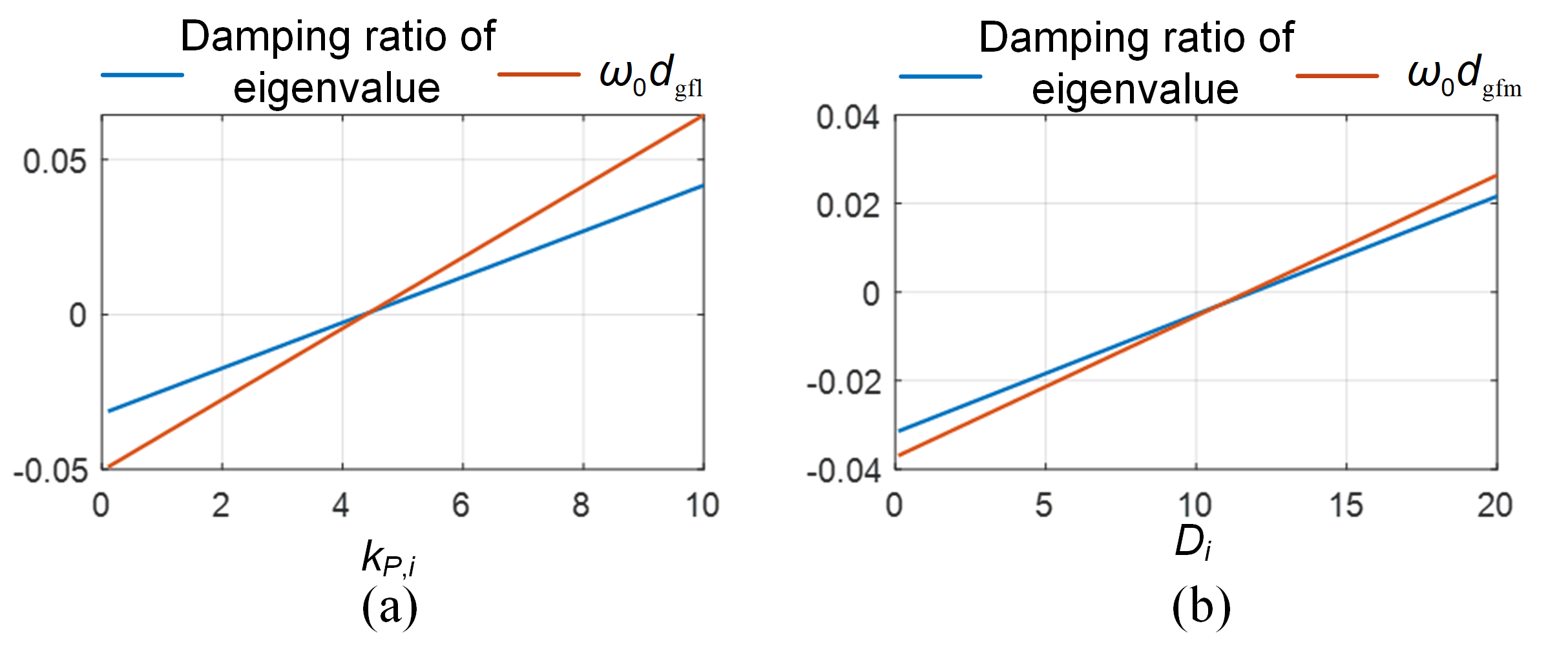}
	\vspace{-3mm}
	\caption{The proposed index and damping ratio of eigenvalue with different parameters. (a) GFL converters subsystem. (b) GFM converters subsystem.} 
	\vspace{-0.4cm}
	\label{fig12}
\end{figure}
\section{The Decentralized Stability Conditions of the Coupled System Based on Small Phase Theorem}
When GFM and GFL converters are coupled, the matrices of \eqref{eq:GNC function} are no longer Hermitian, making it difficult to apply the Proposition \ref{stability condition}. Therefore, this section introduces a unified approach to evaluate small-signal synchronous stability, applicable to both coupled and decoupled systems. 

\textcolor{catalogueblue}{Although the method in this section avoids computing the dynamic matrices of the GFL converters compared with previous methods, it should be noted that the approach in this section is more complex than \eqref{eq:static stability}, as it embeds the GFM parameters into the network while separating only the GFL control parameters. Therefore, its practical application requires more complicated computations or network identification based on data measured at the GFL converter terminals. Hence, under the condition that decoupling is achievable, we still recommend using \eqref{eq:static stability} to assess stability, as its concise form is more convenient for engineering practice.}
 
\subsection{The definition of complex matrix phase}
The phase of a complex matrix $A\in \mathbb{C}^{n\times n}$ is defined in \cite{wang2020phases} based on the {\em numerical range} $W(A)$ of a matrix,
\begin{equation}\label{eq:numerical range}
\begin{aligned}
W(A)={x^HAx: x\in \mathbb{C}^n, \|x\|=1}
\,,
\end{aligned}
\end{equation}
which is the set of all values of $x^HAx$ for any unit-norm complex vector $x$. $W(A)$ is a convex subset of $\mathbb{C}$ and contains the eigenvalues of $A$. $A$ is called a {\em sectorial matrix} if $0 \notin W(A)$ \cite{wang2020phases}.
\begin{definition}[{\bf Phase of a matrix}]\label{def:phase}
If $A$ is a sectorial matrix, it can be decomposed to $A=V^HEV$. Where, $V$ is a non-singular matrix and $E={\rm diag}(e_1,\cdots,e_n)$ is a diagonal unitary matrix. All elements of $E$ lie on an arc of the unit circle with length smaller than $\pi$. The {\em phase} range of $A$ can be defined as $\phi(A)\in[\underline{\phi}(A),\overline{\phi}(A)]$, 
\begin{equation}\label{eq:small phase}
\begin{aligned}
\overline{\phi}(A)=&{\rm max} [\phi(e_1),\cdots,\phi(e_n)],\\
\underline{\phi}(A)=&{\rm min} [\phi(e_1),\cdots,\phi(e_n)]
\,,
\end{aligned}
\end{equation}
with $\overline{\phi}(A)-\underline{\phi}(A)<180^\circ$. $\phi(\cdot)$ represents taking the phase of the complex number.
\end{definition}
\begin{figure}
	\centering
	\includegraphics[width=3in]{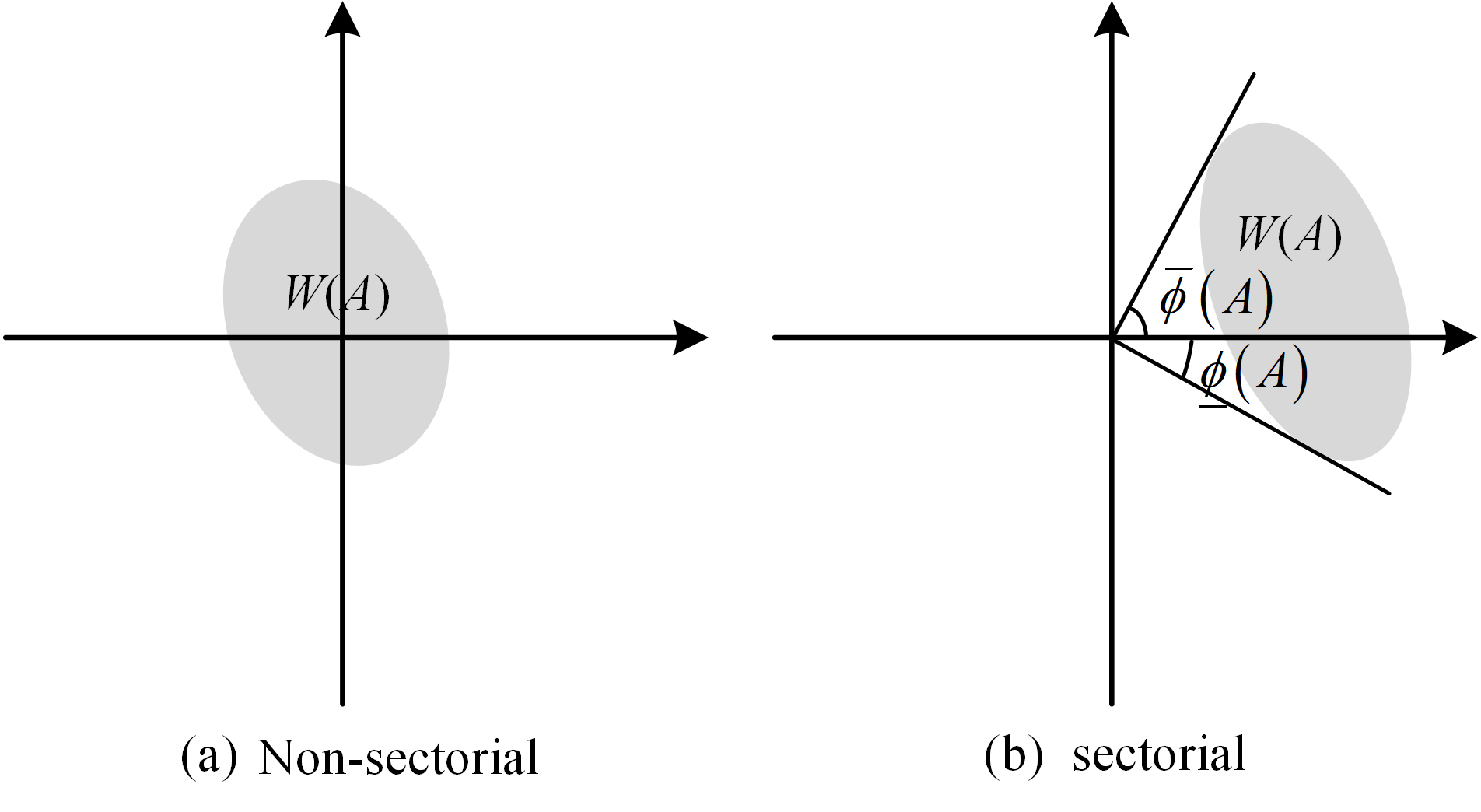}
	\vspace{-3mm}
	\caption{A non-sectorial matrix and phase of a sectorial matrix. (a) Non-sectorial. (b) sectorial.} 
	\vspace{-0.4cm}
	\label{fig13}
\end{figure}
The graphic illustrations of the numerical range $W(A)$ and the phases of $A$ are shown in Fig. \ref{fig13}. The two angles between the tangent from the origin to $W(A)$ and the horizontal axis in Fig. \ref{fig13} (b) are $\overline{\phi}(A)$ and $\underline{\phi}(A)$, respectively. All elements of $E$ lie in $W(A)$. If $A$ is a sectorial matrix, we can obtain 
\begin{equation}\label{eq:phase calculation}
\begin{aligned}
A(A^H)^{-1}=V^HE^2(V^H)^{-1}
\,.
\end{aligned}
\end{equation}

Therefore, the phase of $A$ (i.e. the phase of the elements of $E$) are the halves of the phases of the eigenvalues of $A(A^H)^{-1}$. The eigenvector matrix of $A(A^H)^{-1}$ is $V^H$. More details on phase calculation can be found in \cite{wang2020phases}.

\subsection{The small phase theorem}
We denote the characteristic polynomial of a feedback system by ${\rm det}(I+GH)=0$ with $G,H \in \mathcal{RH}^{m \times m}_\infty $ , whose block diagram is shown in Fig. \ref{fig14}. The following theorem provides a sufficient stability condition for the feedback system in Fig. \ref{fig14}.
\begin{figure}
	\centering
	\includegraphics[width=2in]{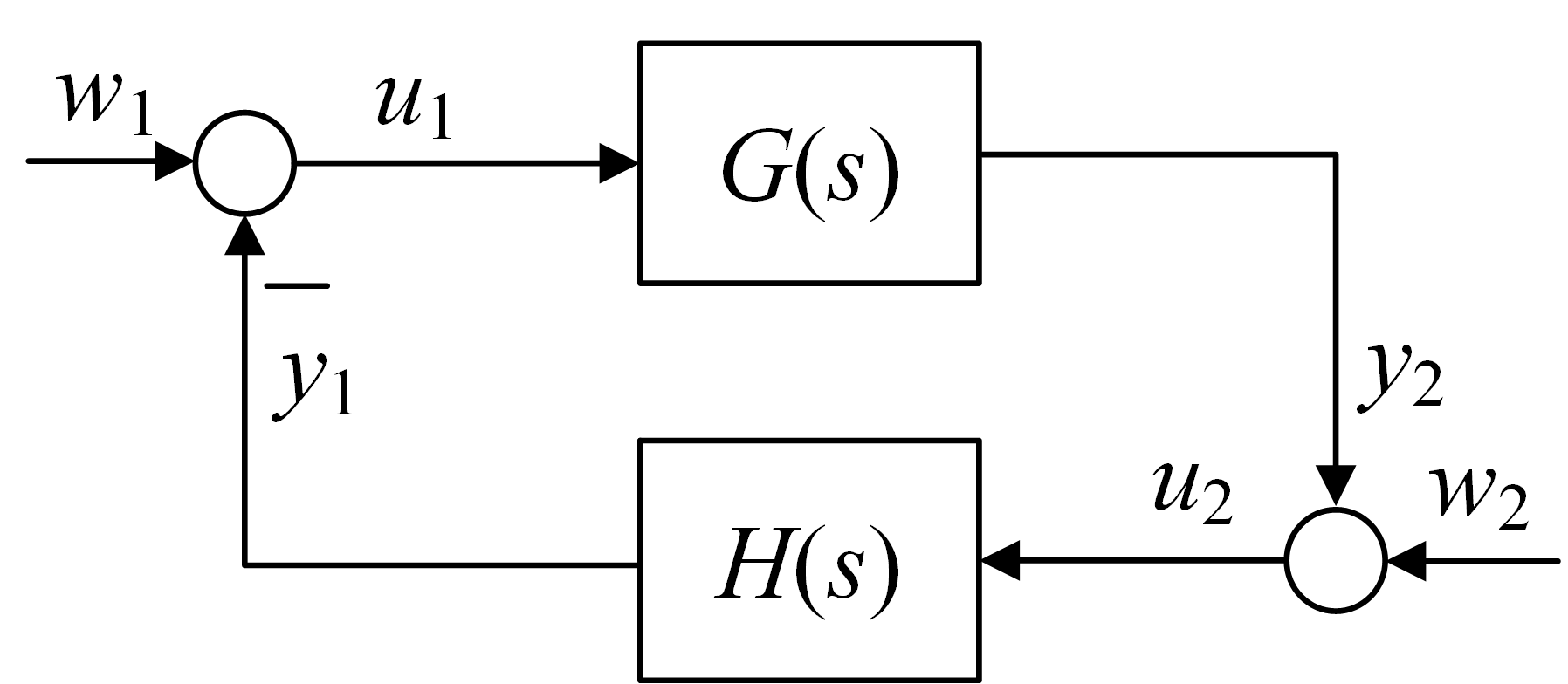}
	\vspace{-3mm}
	\caption{A feedback system.} 
	\vspace{-0.4cm}
	\label{fig14}
\end{figure}
\begin{lemma} \label{small phase}
Let the open-loop systems $G(s)$ and $H(s)$ are real, rational, stable, and proper transfer function matrices, if for each $\omega \in [0,\infty)$, satisfy $G(j\omega)$ and $H(j\omega)$ are sectorial and
\begin{equation}\label{eq:small phase theroy}
\begin{aligned}
\overline{\phi}[G(j\omega)]+\overline{\phi}[H(j\omega)]<&180^\circ,\, {\rm and}\\
\underline{\phi}[G(j\omega)]+\underline{\phi}[H(j\omega)]>&-180^\circ
\,,
\end{aligned}
\end{equation}
the feedback system in Fig. \ref{fig14} is stable \cite{chen2019phase}.
\end{lemma}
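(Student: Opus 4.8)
The plan is to reduce the lemma to the generalized Nyquist criterion and to supply the one matrix-theoretic ingredient that makes the phase hypothesis bite: a bound on the phases of the eigenvalues of the product $G(j\omega)H(j\omega)$. Since $G,H \in \mathcal{RH}^{m\times m}_\infty$ are stable and proper, the open loop $L(s):=G(s)H(s)$ has no poles in the closed right half plane, so by the generalized Nyquist criterion the feedback system of Fig.~\ref{fig14} is stable if and only if the eigenloci $\{\lambda_i(L(j\omega))\}$ make \emph{no} net encirclement of the critical point $-1$ as $\omega$ traverses the imaginary axis. By the conjugate symmetry of real-rational systems it suffices to control $\omega\in[0,\infty)$, exactly the range in which \eqref{eq:small phase theroy} is posed. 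Hence the whole problem collapses to showing that, under \eqref{eq:small phase theroy}, none of the eigenloci winds around $-1$.

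First I would establish the central algebraic fact: \emph{for sectorial matrices $A,B$, every eigenvalue $\lambda$ of $AB$ satisfies} $\underline{\phi}(A)+\underline{\phi}(B)\le \phi(\lambda)\le \overline{\phi}(A)+\overline{\phi}(B)$, as long as these summed bounds lie on an arc of length below $2\pi$. Using the decomposition of Definition~\ref{def:phase}, write $A=V_A^H E_A V_A$ with $E_A$ diagonal unitary whose diagonal phases fill $[\underline{\phi}(A),\overline{\phi}(A)]$. Via the similarity $AB\sim E_A\,(V_A B V_A^H)$ and the invariance of the phase range under $*$-congruence $M\mapsto T^H M T$, the eigenvalues of $AB$ equal those of $E_A\tilde B$, where $\tilde B=V_A B V_A^H$ retains the phase range $[\underline{\phi}(B),\overline{\phi}(B)]$ of $B$. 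One then tracks each $\phi(\lambda_i(E_A\tilde B))$ through a continuous homotopy that rotates $E_A$ and the spectrum of $\tilde B$ toward the scalar rotations realizing their extreme phases; a majorization argument pins the phases inside the summed arc. This is the main obstacle: for scalars $\phi(ab)=\phi(a)+\phi(b)$ holds exactly, but in the matrix case one obtains only the interval bound, and its rigorous proof needs the sectoriality of \emph{both} factors together with the length-below-$\pi$ arc property of each $E$ so that the product stays sectorial throughout the deformation.

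With the phase bound in hand, I would apply it pointwise in $\omega$ to $A=G(j\omega)$ and $B=H(j\omega)$. Hypothesis \eqref{eq:small phase theroy} gives $\overline{\phi}(G)+\overline{\phi}(H)<180^\circ$ and $\underline{\phi}(G)+\underline{\phi}(H)>-180^\circ$, so every eigenvalue of $L(j\omega)$ has phase strictly inside $(-180^\circ,180^\circ)$. Equivalently, each eigenlocus lies in the slit plane $\mathbb{C}\setminus(-\infty,0]$ and in particular is never a negative real number; hence $-1\notin\sigma(L(j\omega))$ and $I+L(j\omega)$ is nonsingular for every $\omega$, where properness controls the limit as $\omega\to\infty$.

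Finally I would close the topological argument. Each eigenlocus is a continuous closed curve (from continuity of the eigenvalues of the proper matrix function $L(j\omega)$, completed through the point at infinity) confined to the simply connected slit plane $\mathbb{C}\setminus(-\infty,0]$; any curve so confined has winding number zero about every point of the negative real axis, in particular about $-1$. Summing over the $m$ eigenloci yields a total encirclement count of zero. Combined with the absence of open-loop right-half-plane poles, the generalized Nyquist criterion delivers closed-loop stability, which is precisely the assertion of Lemma~\ref{small phase}.
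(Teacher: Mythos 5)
The paper never proves Lemma~\ref{small phase}; it is imported verbatim from \cite{chen2019phase}, so there is no in-paper proof to compare your attempt against. What you have written reconstructs, in essentially the same architecture as that reference, the standard proof of the small phase theorem: the generalized Nyquist criterion applied to the stable, proper open loop $L=GH$, a bound on the phases of the eigenvalues of a product of two sectorial matrices, and a winding-number argument excluding encirclements of $-1$. That architecture is sound, and the conclusion does follow from it.

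The one genuine weak point is your treatment of the central matrix fact. The claimed eigenvalue phase bound is correct, but ``tracking the phases through a continuous homotopy'' combined with ``a majorization argument'' is a gesture rather than a proof --- and it is unnecessary, because a quadratic-form argument settles it directly. If $ABx=\lambda x$ with $x\neq 0$, left-multiplying by $x^{H}B^{H}$ gives
\begin{equation*}
\lambda=\frac{z^{H}Az}{x^{H}B^{H}x},\qquad z:=Bx\neq 0,
\end{equation*}
where $z\neq 0$ because sectorial matrices are nonsingular. The numerator is a positive multiple of a point of $W(A)$, so its phase lies in $[\underline{\phi}(A),\overline{\phi}(A)]$; the denominator equals $(x^{H}Bx)^{*}$, so its phase lies in $[-\overline{\phi}(B),-\underline{\phi}(B)]$; dividing yields $\phi(\lambda)\in[\underline{\phi}(A)+\underline{\phi}(B),\overline{\phi}(A)+\overline{\phi}(B)]$, which under \eqref{eq:small phase theroy} is contained in $(-180^\circ,180^\circ)$, so no eigenvalue of $L(j\omega)$ ever touches $(-\infty,0]$. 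A second, smaller imprecision: individual eigenloci need not be closed curves, since eigenvalue branches can permute as $\omega$ varies, so ``each eigenlocus has winding number zero'' is not well posed branch by branch. Run the argument instead on $\det\left(I+L(j\omega)\right)=\prod_i\bigl(1+\lambda_i(L(j\omega))\bigr)$: because $\arg(1+z)$ admits a continuous single-valued branch on $\mathbb{C}\setminus(-\infty,0]$, the symmetric sum $\sum_i\arg\bigl(1+\lambda_i\bigr)$ is continuous in $\omega$ and returns to its initial value over the closed Nyquist contour, forcing zero net winding; alternatively, use the homotopy $\epsilon\mapsto\det\left(I+\epsilon L(j\omega)\right)$, $\epsilon\in[0,1]$, which never vanishes under the phase condition --- this is the route taken in \cite{chen2019phase}. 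With these two repairs your proof is complete; the behavior at $\omega=\infty$, which the lemma's frequency range $[0,\infty)$ leaves open, is covered by properness exactly as you note.
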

\subsection{The decentralized stability condition of coupled system}
\textcolor{catalogueblue}{When the decoupling condition is not satisfied, the system \eqref{eq:closedloop} can be divided into two coupled subsystems through the Schur complement,
\begin{equation}\label{eq:Shcur}
\begin{aligned}
\left\{\begin{matrix}
{\rm det}\left[I_n+\tilde{{\bm G}}_{\rm gfl}(s)\left({\bm L}_1+{\bm K}^d_{\rm gfl}s-{\bm L}_2{\bm N}^{-1}_{\rm gfm}(s){\bm L}_3\right)\right]=0\\
{\rm det}[{\bm N}_{\rm gfm}(s)]=0
\end{matrix}\right.
,
\end{aligned}
\end{equation} 
where $\tilde{\bm G}_{\rm gfl}(s)={\rm diag}\left\{\frac{T_{P,i}s+1}{T_{J,i}s^2}\right\},i \in [1,n]$.}

\textcolor{catalogueblue}{For GFM subsystem, the characteristic polynomial is still ${\rm det}[{\bm N}_{\rm gfm}(s)]=0$, we can still use \eqref{eq:static stability} to qualify the stability of the GFM subsystem. In addition, for the stability analysis of the coupled GFL subsystem, we propose the following model and proposition.}

\textcolor{catalogueblue}{We consider the coupled feedback system shown in \eqref{eq:Shcur}, the GFL subsystem characteristic polynomial is rewritten as (the derivation can be found in Appendix \ref{sec:appendixD}.)
\begin{equation}\label{eq:small phase theroy function}
\begin{aligned}
{\rm det}\left[{\bm I}_{n}+{\bm G}_{\rm gfl}(s)\left({\bm L}_1-\widetilde{{\bm L}}(s)-0.1{\bm V}^{\rm gfl}_d\right)\right]=0
\,,
\end{aligned}
\end{equation}
where, $\widetilde{{\bm L}}(s)=-{\bm K}^d_{\rm gfl}s+{\bm L}_2{\bm N}^{-1}_{\rm gfm}(s){\bm L}_3$.}

The original transfer function of a GFL converter is $\frac{T_{P,i}s+1}{T_{J,i}s^2}$. Based on this, we incorporate $0.1{\bm V}^{\rm gfl}_d$ into GFL converters dynamics, resulting in a new equivalent transfer function ${\bm G}_{\rm gfl}(s)={\rm diag}\left\{\frac{T_{P,i}s+1}{T_{J,i}s^2+0.1T_{P,i}{\bm V}^{\rm gfl}_{d,i}s+0.1{\bm V}^{\rm gfl}_{d,i}}\right\},i \in [1,n]$. This modification helps to avoid the converter's transfer function phase approaching –180° in the low-frequency range, thus maintaining the validity of phase calculation and addressing the phase degradation problem noted in \cite{Linbin:gain_phase} . 

\textcolor{catalogueblue}{Based on Lemma~\ref{small phase}, let $G(s)={\bm G}_{\rm gfl}(s)$, $H(s)={\bm L}_1-\widetilde{{\bm L}}(s)-0.1{\bm V}^{\rm gfl}_d$. Because ${\bm G}_{\rm gfl}(s)$ is a diagonal matrix, $\underline{\phi}[{\bm G}_{\rm gfl}(j\omega)]={{\rm min}}\left\{\phi[{\bm G}_{{\rm gfl},i}(j\omega)]\right\}$ and  $\overline{\phi}[{\bm G}_{\rm gfl}(j\omega)]={{\rm max}}\left\{\phi[{\bm G}_{{\rm gfl},i}(j\omega)]\right\}$. The phase of ${\bm G}_{\rm gfl}(s)$ is given by $  \phi[{\bm G}_{{\rm gfl},i}(j\omega)]=$
\begin{equation}\label{eq:arc}
  \tan^{-1}(T_{P,i}\omega+1)-\tan^{-1}\left(\frac{T_{P,i}\omega}{0.1v_{d,i}-T_{J,i}\omega^2}\right) \,,
\end{equation}
with $T_{P}=k_P/k_I$, $T_J=1/k_I$. Then we can obtain the following proposition.}
\begin{proposition}[{\bf The small-signal stability conditions of coupled systems}]
According to Lemma \ref{small phase}, for each $\omega \in [0,\infty)$, if
\begin{equation}\label{eq:stability}
\begin{aligned}
\underset{i\in[1,n+m]}{{\rm max}}\left\{\phi[{\bm G}_{\rm eq,i}(j\omega)]\right\}<& \underbrace{180^\circ-\overline{\phi}[{\bm L}_1-\widetilde{{\bm L}}(j\omega)-0.1{\bm V}^{\rm gfl}_d]}_{=:\overline{\phi}_{\rm net}},\,\\
\underset{i\in[1,n+m]}{{\rm min}}\left\{\phi[{\bm G}_{\rm eq,i}(j\omega)]\right\}>&\underbrace{-180^\circ-\underline{\phi}[{\bm L}_1-\widetilde{{\bm L}}(j\omega)-0.1{\bm V}^{\rm gfl}_d]}_{=:\underline{\phi}_{\rm net}}
\,,
\end{aligned}
\end{equation}
are satisfied simultaneously, the system of \eqref{eq:small phase theroy function} is stable. 
\end{proposition}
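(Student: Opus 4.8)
The plan is to recognize that the GFL characteristic equation \eqref{eq:small phase theroy function} is already written in the closed-loop form $\det[{\bm I}_n+G(s)H(s)]=0$ required by Lemma \ref{small phase}, with the open-loop factors
\[
G(s)={\bm G}_{\rm gfl}(s), \qquad H(s)={\bm L}_1-\widetilde{\bm L}(s)-0.1{\bm V}^{\rm gfl}_d,
\]
and then to invoke the small phase theorem directly. The whole argument is therefore a verification of the hypotheses of Lemma \ref{small phase}, followed by a rewriting of its two phase inequalities into the decentralized form claimed in the proposition.

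First I would check the regularity hypotheses on the two factors. For $G(s)$, the modification that embeds $0.1{\bm V}^{\rm gfl}_d$ into the denominator, yielding ${\bm G}_{\rm gfl}(s)={\rm diag}\{(T_{P,i}s+1)/(T_{J,i}s^2+0.1T_{P,i}{\bm V}^{\rm gfl}_{d,i}s+0.1{\bm V}^{\rm gfl}_{d,i})\}$, serves two purposes: it moves the double pole at the origin into the open left half-plane so that $G$ becomes stable and proper, and --- as the explicit formula \eqref{eq:arc} shows --- it keeps $\phi[{\bm G}_{{\rm gfl},i}(j\omega)]$ strictly above $-180^\circ$ in the low-frequency range, which is exactly the phase-degradation remedy noted after \eqref{eq:small phase theroy function}. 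For $H(s)$, the only candidate right-half-plane poles come from the factor ${\bm N}^{-1}_{\rm gfm}(s)$ inside $\widetilde{\bm L}(s)=-{\bm K}^d_{\rm gfl}s+{\bm L}_2{\bm N}^{-1}_{\rm gfm}(s){\bm L}_3$; hence I would invoke the second line of \eqref{eq:Shcur} together with Proposition \ref{stability condition} to guarantee that $\det[{\bm N}_{\rm gfm}(s)]$ has no closed-right-half-plane zeros, so that ${\bm N}^{-1}_{\rm gfm}(s)$ is analytic there and $H(s)$ is stable. Finally the sectoriality of $G(j\omega)$ and $H(j\omega)$, i.e. $0\notin W(G(j\omega))$ and $0\notin W(H(j\omega))$, is checked frequency by frequency so that the matrix phases of Definition \ref{def:phase} are well defined.

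Next I would exploit the diagonal structure of $G={\bm G}_{\rm gfl}$. For a diagonal sectorial matrix the numerical range is the convex hull of its diagonal entries, so $\overline{\phi}[{\bm G}_{\rm gfl}(j\omega)]=\max_i\phi[{\bm G}_{{\rm gfl},i}(j\omega)]$ and $\underline{\phi}[{\bm G}_{\rm gfl}(j\omega)]=\min_i\phi[{\bm G}_{{\rm gfl},i}(j\omega)]$, with each element phase given by \eqref{eq:arc}. Substituting these identities into the two inequalities of \eqref{eq:small phase theroy} and moving the $H$-phases to the right-hand side produces exactly the decentralized bounds $\overline{\phi}_{\rm net}=180^\circ-\overline{\phi}[H(j\omega)]$ and $\underline{\phi}_{\rm net}=-180^\circ-\underline{\phi}[H(j\omega)]$ of \eqref{eq:stability}; when both hold for every $\omega\in[0,\infty)$, Lemma \ref{small phase} certifies that \eqref{eq:small phase theroy function} has no closed-right-half-plane roots, i.e. the GFL subsystem is stable.

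The hard part will be the regularity of $H(s)$ rather than the phase bookkeeping, which is routine. Two points need care: the term $-{\bm K}^d_{\rm gfl}s$ makes $H$ grow linearly in $s$ and hence improper, so strictly speaking Lemma \ref{small phase} must be applied either after absorbing this derivative action into a proper reformulation or by restricting attention to the bounded, physically relevant oscillation band on which the phase test is actually evaluated; and the inheritance of stability through ${\bm N}^{-1}_{\rm gfm}(s)$ makes the result conditional on GFM-subsystem stability, so the two lines of \eqref{eq:Shcur} must be established jointly rather than in isolation. Verifying sectoriality uniformly in $\omega$, so that the two numerical ranges never touch the origin, is the remaining nontrivial check.
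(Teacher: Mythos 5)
Your proposal follows essentially the same route as the paper: the proposition is obtained by applying Lemma \ref{small phase} directly to \eqref{eq:small phase theroy function} with $G(s)={\bm G}_{\rm gfl}(s)$ and $H(s)={\bm L}_1-\widetilde{{\bm L}}(s)-0.1{\bm V}^{\rm gfl}_d$, exploiting the diagonality of ${\bm G}_{\rm gfl}(s)$ to write its phase bounds element-wise via \eqref{eq:arc} and then rearranging the two inequalities of \eqref{eq:small phase theroy} into the decentralized form \eqref{eq:stability}. Your additional verifications --- stability of $H(s)$ through ${\bm N}^{-1}_{\rm gfm}(s)$ (which is why the paper separately requires $d_{\rm gfm}>0$ for the full system \eqref{eq:Shcur}), sectoriality at each frequency, and the improperness introduced by the $-{\bm K}^d_{\rm gfl}s$ term --- are precisely the hypotheses of Lemma \ref{small phase} that the paper leaves implicit, so they add rigor without changing the argument.
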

\textcolor{catalogueblue}{The equivalent transformation after applying the Schur complement treats the GFL control parameters as the converter side while embedding the GFM information into the network, making ${\bm L}_1-\widetilde{{\bm L}}(s)-0.1{\bm V}^{\rm gfl}_d$ as an equivalent network model. $\left[\underline{\phi}_{\rm net},\overline{\phi}_{\rm net}\right]$ is referred to as the phase area of the equivalent network.} Therefore, when $d_{\rm gfm}>0$ and \eqref{eq:stability} are satisfied simultaneously, the system of Eq. \eqref{eq:Shcur} is stable.

\begin{example}
Based on the system in Fig.\ref{fig5}, we set GFM converters with $J=1p.u.$ and GFL converters with $\varphi=-0.5{\rm rad}$. The phase area of the equivalent network and GFL converters phases are shown in Fig.\ref{fig15}. 

We can observe that the stability of the coupled system can be inferred from the distance between the GFL converters' phases and the equivalent network phase boundary. When $k_P=2$, the system is unstable; when $k_P=3$, the system is critically stable; and when $k_P=20$, the system has a large stability margin. This is generally consistent with the conclusions drawn in Fig.\ref{fig12}, but the value of $k_P$ at the stability boundary is different . 
\begin{figure}
	\centering
	\includegraphics[width=3in]{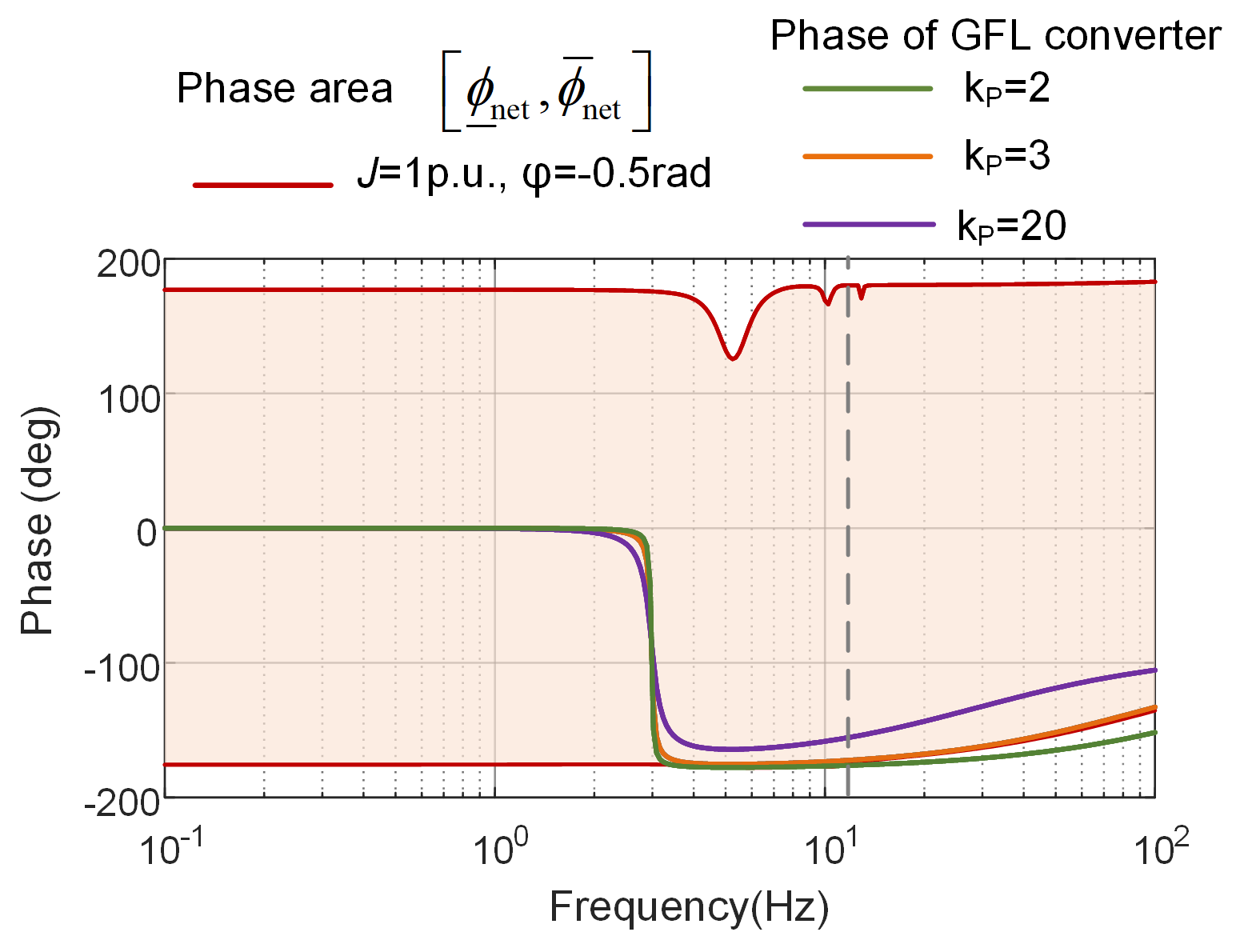}
	\vspace{-3mm}
	\caption{The phase area of equivalent network and GFL converters.} 
	\vspace{-0.4cm}
	\label{fig15}
\end{figure}
\end{example}
\section{Time-Domain Simulation}

Based on the IEEE 39 buses system shown in Fig. \ref{fig5}, we take five-case simulation in the MATLAB/Simulink to verify the the model accuracy and stability condition. The time domain responses are shown in Fig.\ref{fig16}. 

\textcolor{catalogueblue}{Case 1 (fixed $J=3$p.u. and $D=15$p.u. of GFM converters, fixed $k_P=3$ and $\varphi=0$ of GFL converters): In Case 1, we compare the time-domain responses of the proposed simplified model, the full-order detailed model, and the electromechanical model. Fig.~\ref{fig16} shows the active power response of VSC1 when the external grid voltage experiences a 0.02p.u. voltage dip. The electromechanical model exhibits small oscillations that quickly converge, while both the proposed model and the detailed model diverge and become unstable. Although there are slight differences between the two, their overall trends are consistent. This verifies that the proposed model can accurately analyze small-signal synchronous stability, consistent with the eigenvalue analysis results shown in Fig.~\ref{fig6}. In contrast, the electromechanical model fails to capture the small-signal instability of the GFL converter because the GFL oscillation frequency lies outside the electromechanical timescale. }
\begin{figure}
	\centering
	\includegraphics[width=3.5in]{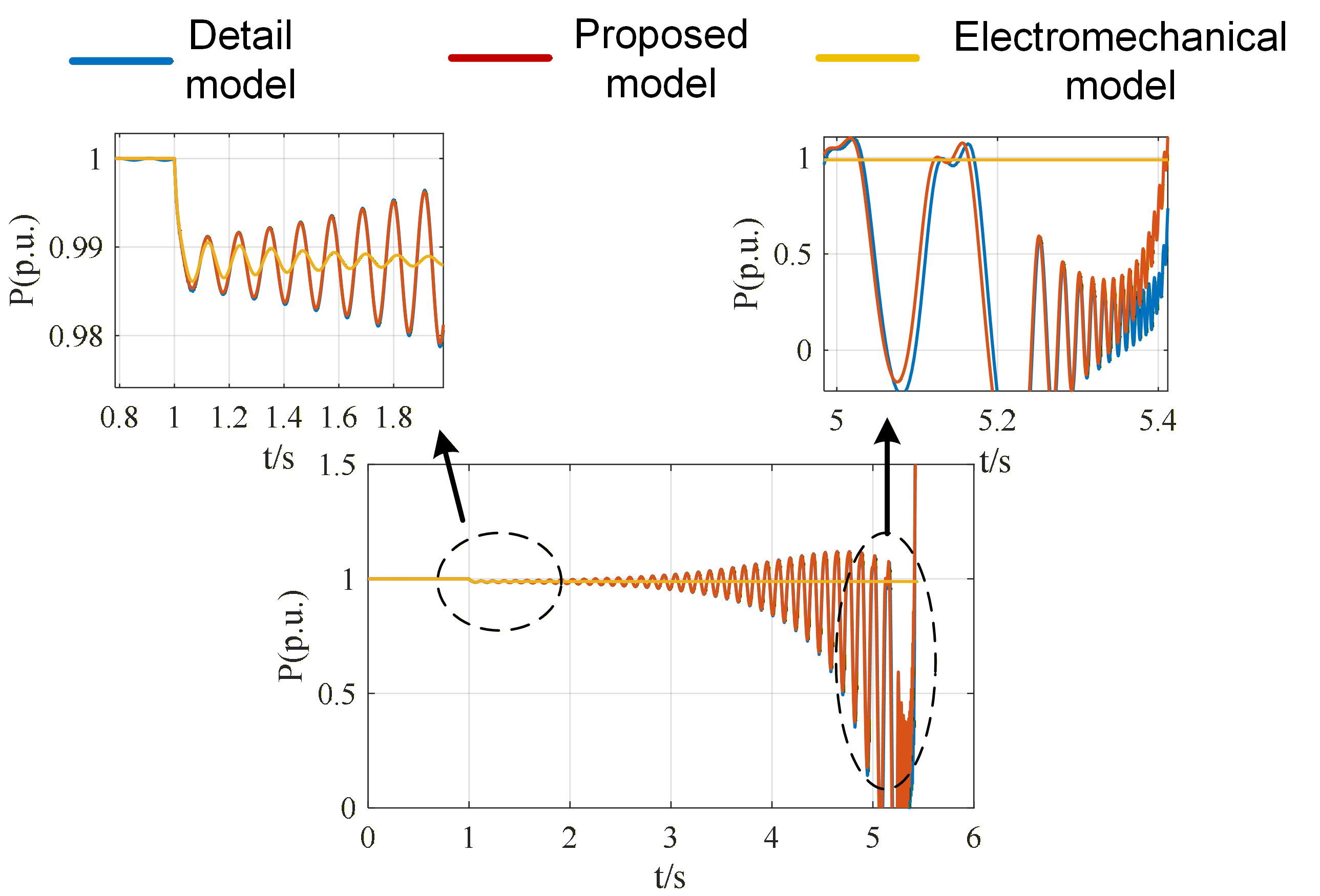}
	\vspace{-3mm}
	\caption{\textcolor{catalogueblue}{Comparison of time-domain responses among different models in case 1.}} 
	\vspace{-0.4cm}
	\label{fig16}
\end{figure}

Case 2 (fixed $J=6$p.u. and varying $D$ of GFM converters, fixed $k_P=20$ and $\varphi=0$ of GFL converters): As shown in Fig.\ref{fig17} , with $D$ decreasing, the oscillation amplitude increases. When $D=13$p.u., the system is close to critical stability. When $D=12$p.u., the system is unstable. The simulation results align with Fig.\ref{fig12} (a) of Example 3.

Case 3 (fixed $J=6$p.u. and $D=60$p.u. of GFM converters, varying $k_P$ and fixed $\varphi=0$ of GFL converters): As shown in Fig.\ref{fig18}, with $k_P$ decreasing, the oscillation amplitude increases. When $k_P=5$, the system is close to critical stability. When $k_P=4$, the system is unstable. The simulation results are aligned with Fig.\ref{fig12} (b) of Example 3.

Case 4 (fixed $J=1$p.u. and $D=10$p.u. of GFM converters, varying $k_P$ and fixed $\varphi=-0.5{\rm rad}$ of GFL converters): As shown in Fig.\ref{fig19}, with $k_P$ decreasing, the oscillation amplitude increases. When $k_P=3$, the system is close to critical stability. When $k_P=2$, the system is unstable. The simulation results align with Fig.\ref{fig15} of Example 4.

\textcolor{catalogueblue}{Case 5 (varying $J$ and varying $D$ of GFM converters, fixed $k_P=20$ and varying $\varphi$ of GFL converters): To further verify the effectiveness of the proposed stability conditions under different scenarios, simulations were conducted for two sets: $\varphi=-0.35$rad, $J=4$p.u. and $\varphi=-0.2$rad, $J=10$. As shown in Fig.~\ref{fig11}, both two satisfy the decoupling condition. The analytically calculated critical damping coefficients of the GFM converters are 11.8p.u. and 9.6p.u., respectively. The simulated critical damping coefficients are 12.4p.u. and 9.5p.u., with the corresponding time-domain waveforms shown in Fig.~\ref{fig20}~(a) and~(b). It can be observed that the waveforms exhibit equal-amplitude oscillations at the critical stability point. The errors between the analytical and simulated critical parameters are only 4.84\% and 1.04\%, respectively, further confirming the validity of the proposed stability criterion.}

Whether the GFM and GFL converters are coupled or decoupled, the proposed stability analysis method is consistent with the simulation results, which validates the effectiveness of the proposed approach. Moreover, since the stability conditions are directly based on converter parameters, they can be used to tune the synchronization control parameters of GFL and GFM converters. 
\begin{figure*} [t]
	\centering
	\includegraphics[width=6in]{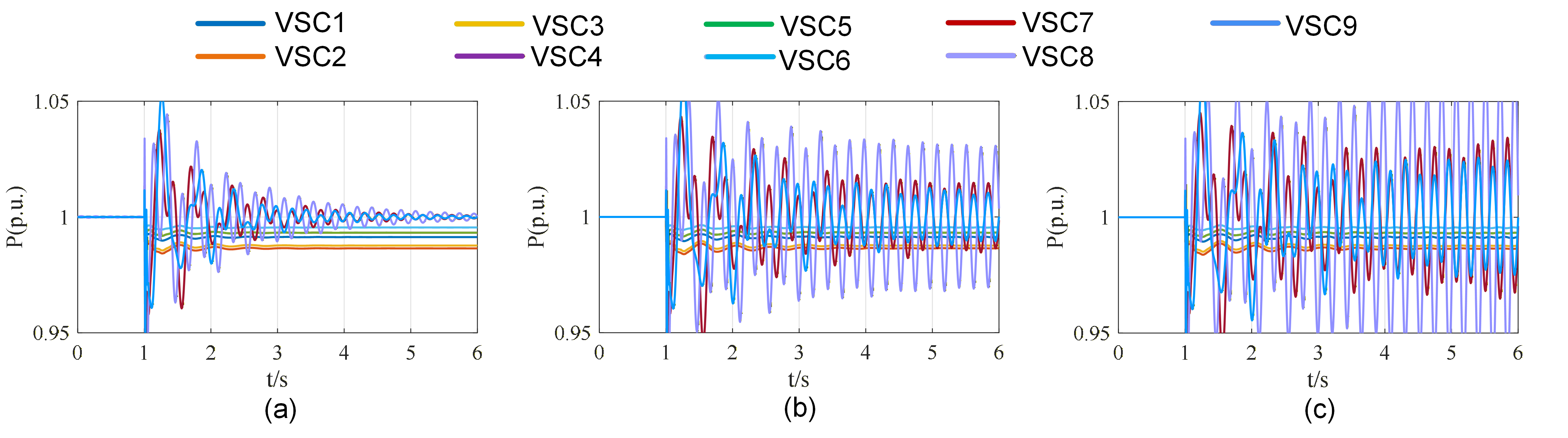}
	\vspace{-3mm}
	\caption{Time domain response of the simulation of case 2. (a) $J=6$p.u., $D=20$p.u., $k_P=20$, $\varphi=0$. (b) $J=6$p.u., $D=13$p.u., $k_P=20$, $\varphi=0$. (c) $J=6$p.u., $D=11$p.u., $k_P=20$, $\varphi=0$.} 
	\vspace{-0.4cm}
	\label{fig17}
\end{figure*}
\begin{figure*} [t]
	\centering
	\includegraphics[width=6in]{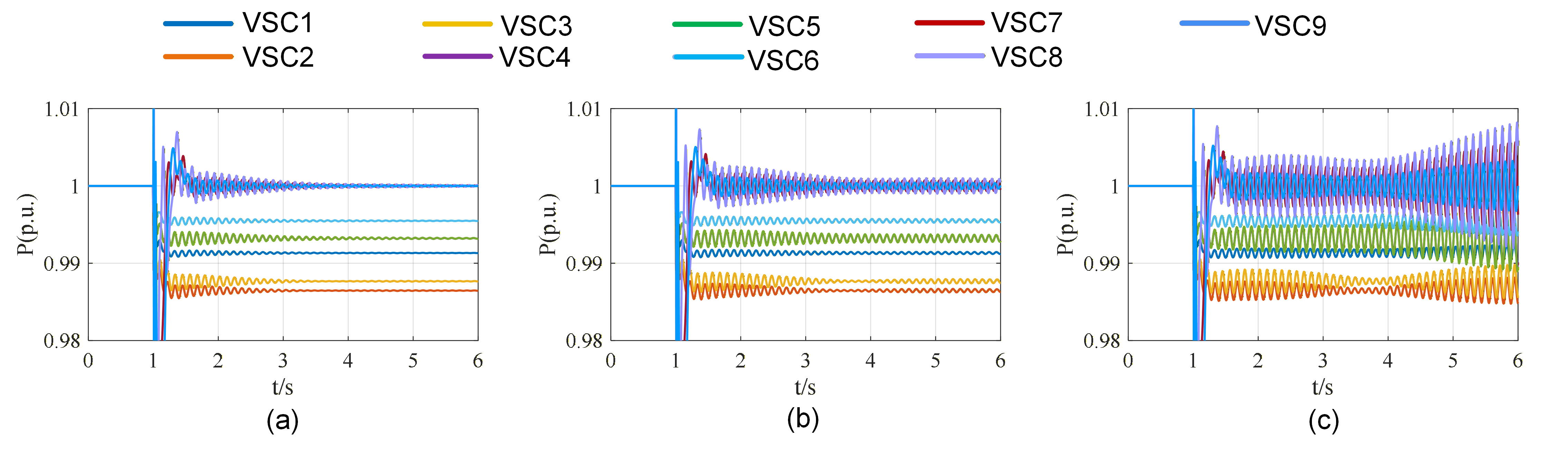}
	\vspace{-3mm}
	\caption{Time domain response of the simulation of case 3. (a) $J=6$p.u., $D=60$p.u., $k_P=6$, $\varphi=0$. (b) $J=6$p.u., $D=60$p.u., $k_P=5$, $\varphi=0$. (c) $J=6$p.u., $D=60$p.u., $k_P=4$, $\varphi=0$.} 
	\vspace{-0.4cm}
	\label{fig18}
\end{figure*}
\begin{figure*} [t]
	\centering
	\includegraphics[width=6in]{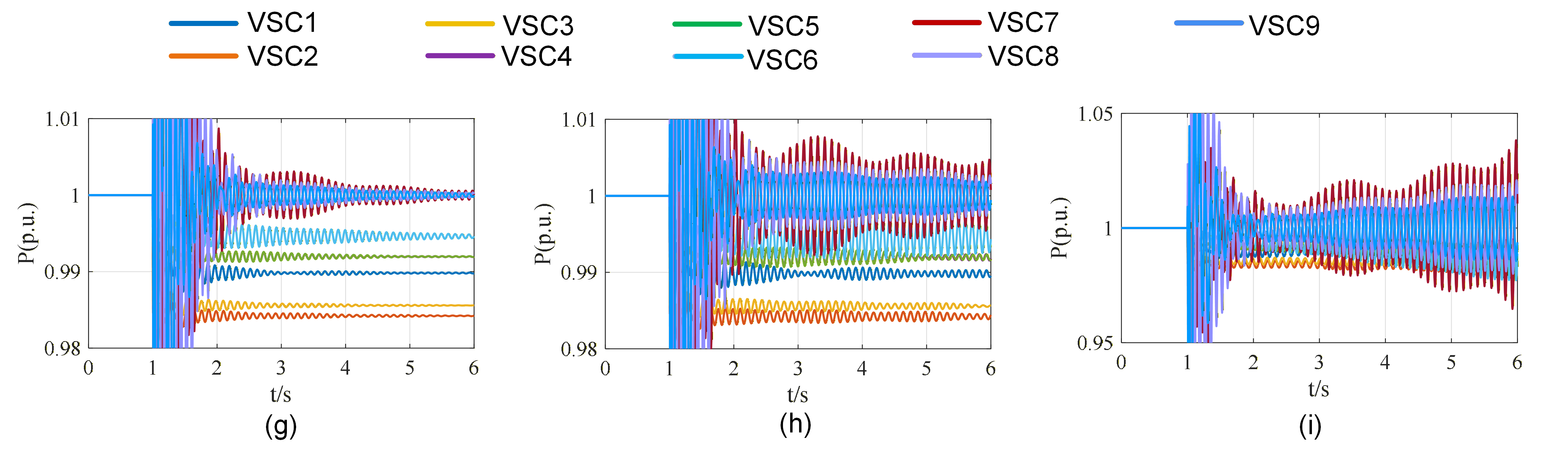}
	\vspace{-3mm}
	\caption{Time domain response of the simulation of case 4. (a) $J=1$p.u., $D=10$p.u., $k_P=4$, $\varphi=-0.5{\rm rad}$. (b) $J=1$p.u., $D=10$p.u., $k_P=3$, $\varphi=-0.5{\rm rad}$. (c) $J=1$p.u., $D=10$p.u., $k_P=2$, $\varphi=-0.5{\rm rad}$.} 
	\vspace{-0.4cm}
	\label{fig19}
\end{figure*}
\begin{figure}
	\centering
	\includegraphics[width=3in]{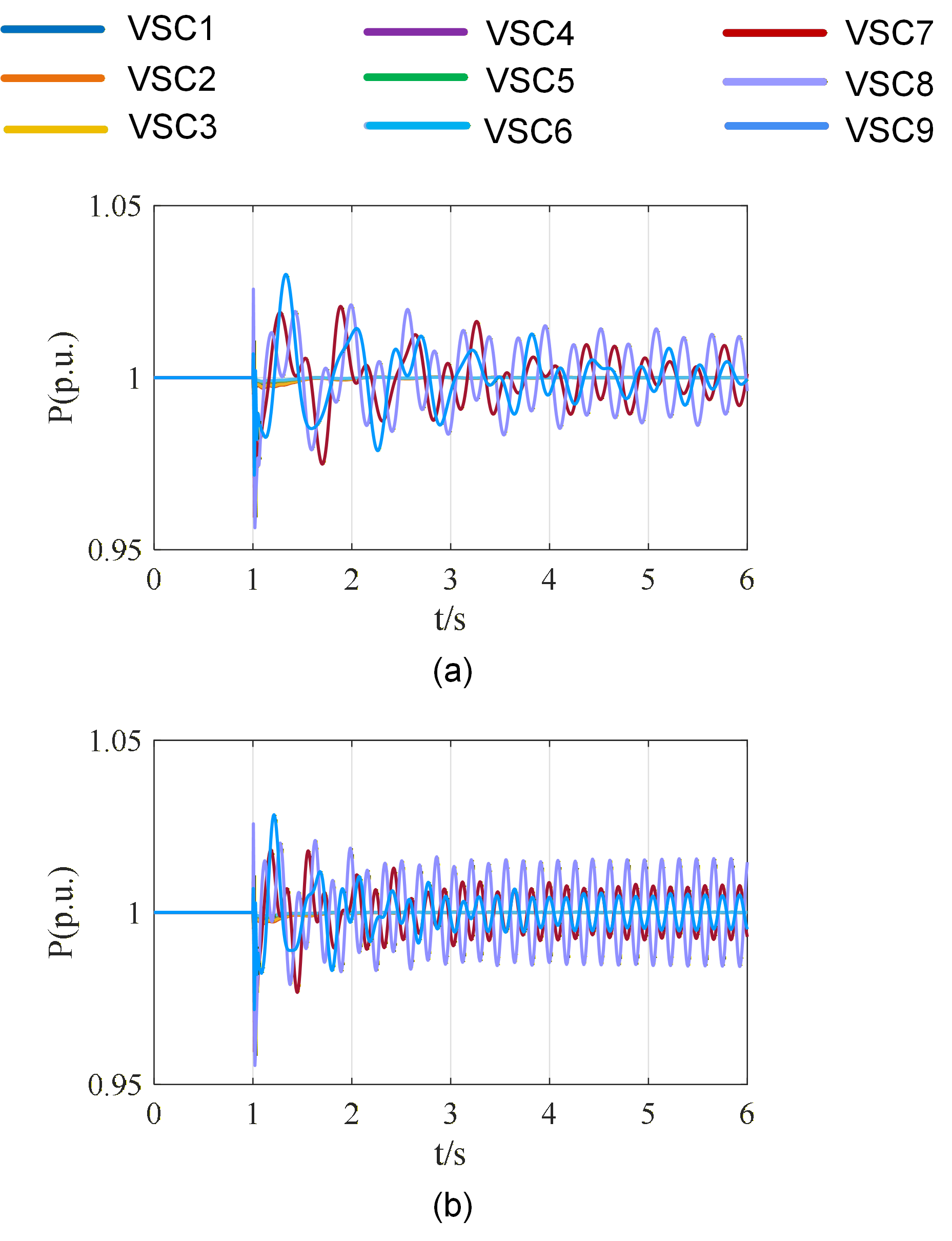}
	\vspace{-3mm}
	\caption{\textcolor{catalogueblue}{Time domain response of the simulation of case 5. (a) $J=4$p.u., $D=12.4$p.u., $k_P=20$, $\varphi=-0.35{\rm rad}$. (b) $J=10$p.u., $D=9.5$p.u., $k_P=20$, $\varphi=-0.2{\rm rad}$. }} 
	\vspace{-0.4cm}
	\label{fig20}
\end{figure}

\section{Conclusions}
\textcolor{catalogueblue}{In this paper, a generalized second-order synchronous stability analysis model is established for power systems containing both GFL and GFM converters. The coupling mechanism between GFM and GFL converters and synchronization stability conditions are analyzed, and the main conclusions are as follows.}

\textcolor{catalogueblue}{1) The proposed model is consistent with the classical second-order model of multiple SGs, allowing the synchronization stability of GFM–GFL hybrid power systems to be directly interpreted using the inertia matrix, damping matrix, and synchronizing coefficient matrix. The synchronizing coefficient matrix reflects the system’s ability to provide synchronizing torque, offering an intuitive synchronization-based interpretation of “forming” (actively providing synchronizing torque) and “following” (unable to provide synchronizing torque and passively following). Whether the damping torque provided by the damping coefficient matrix is positive determines the stability of the system. }

\textcolor{catalogueblue}{2) Based on subspace perturbation theory, the quantitative conditions for coupling between GFM and GFL converters are revealed, providing analytical lower bounds for parameter design to achieve their decoupling. The study shows that the synchronous dynamics of GFM and GFL become coupled only when both the time-scale coupling and the synchronizing coefficient matrix coupling occur simultaneously; otherwise, they can be analyzed independently. The inertia and damping of the GFM determine the coupling of time scales, while the power factor of the GFL determines the coupling of the synchronizing coefficient matrix.}

\textcolor{catalogueblue}{3) For systems where GFM and GFL converters can be decoupled, we derive parameterized decentralized stability conditions by analyzing the properties of the damping coefficient matrix. For coupled systems, we introduce the small-phase theorem to derive decentralized stability criteria as well. These conditions can be directly applied to guide parameter design, as they separate the converter control parameters from the network information. It is recommended, based on the proposed decoupling quantitative conditions, to have the GFM provide sufficient inertia and damping and then use the decoupled stability quantitative conditions to guide controller parameter design. This is because, under coupling, the network dynamics are more complex, making the decoupled conditions more convenient and practical for engineering applications. All theoretical analyzes are validated using simulations on the IEEE 39-bus system. }

In the future, more studies are needed to explore the large-signal synchronous stability characteristics of GFM and GFL converter hybrid systems.

{\appendices
\section{The derivation of network model}
\label{sec:appendixA}
According to \eqref{eq:Ygrid}, $\Delta {\bm I}^{\rm gfl}_{xy}=\left[{\bm Y}_1 \otimes \gamma(s)\right]\Delta {\bm V}^{\rm gfl}_{xy}+\left[{\bm Y}_2 \otimes \gamma(s)\right]\Delta {\bm V}^{\rm gfm}_{xy}$, we can obtain
\begin{equation}\label{eq:ytr}
    \Delta {\bm V}^{\rm gfl}_{xy}=\left[{\bm Y}_1 \otimes \gamma(s)\right]^{-1}\Delta {\bm I}^{\rm gfl}_{xy}-\left[{\bm Y}^{-1}_1{\bm Y}_2 \otimes I_2\right]\Delta {\bm V}^{\rm gfm}_{xy} 
\end{equation}

Then, substitute \eqref{eq:ytr} into $\Delta {\bm I}^{\rm gfm}_{xy}=\left[{\bm Y}_3 \otimes \gamma(s)\right]\Delta {\bm V}^{\rm gfl}_{xy}+\left[{\bm Y}_4 \otimes \gamma(s)\right]\Delta {\bm V}^{\rm gfm}_{xy}$, 
\begin{equation}\label{eq:ytr2}
\begin{aligned}
     \Delta {\bm V}^{\rm gfl}_{xy}=&\left[{\bm Y}_3{\bm Y}_1 \otimes I_2\right]\Delta {\bm I}^{\rm gfl}_{xy}-\\
     &\left[{\bm Y}_4-{\bm Y}_3{\bm Y}^{-1}_1{\bm Y}_2 \otimes \gamma(s)\right]\Delta {\bm V}^{\rm gfm}_{xy}\,.
\end{aligned}
\end{equation}

Based on \eqref{eq:ytr} and \eqref{eq:ytr2}, Eq. \eqref{eq:ZY} holds.

 \section{\textcolor{catalogueblue}{The proof of Proposition~\ref{The decoupling condition}}}
 \label{sec:appendixB}
\textcolor{catalogueblue}{According to the Davis-Kahan subspace perturbation theory in \eqref{eq:subspace}, let $A={\rm diag}({\bm N}_{\rm gfl}(s),{\bm N}_{\rm gfm}(s))$ and $E={\rm antidiag}({\bm L}_2,{\bm L}_3)$. Because both ${\bm N}_{\rm gfl}(s)$ and ${\bm N}_{\rm gfm}(s)$ are approximately symmetric, the condition number of $A$ is $\kappa(A)\approx1$, thus if $\frac{\|{\bm L}_2\|_2\|{\bm L}_3\|_2}{\min|{\lambda}({\bm N}_{\rm gfl}(s))-{\lambda}({\bm N}_{\rm gfm}(s))|}<\epsilon$, then the system \eqref{eq:2citezheng} can be decoupled to subsystem \eqref{eq:decoupled}.}

\textcolor{catalogueblue}{Because both ${\bm N}_{\rm gfl}(s)$ and ${\bm N}_{\rm gfm}(s)$ are approximately symmetric, we obtain
\begin{equation}\label{eq:jieoupro}
\begin{aligned}
&\min|{\lambda}({\bm N}_{\rm gfl}(j\omega))-{\lambda}({\bm N}_{\rm gfm}(j\omega))|\\
&\geq |\min {\lambda}({\bm N}_{\rm gfl}(j\omega))|-|\max {\lambda}({\bm N}_{\rm gfl}(j\omega))|\\
&\geq \left|\underset{\forall i,j}{\min}\left|\frac{J_i\omega^2}{\omega_0v_{d,i}}-\lambda_j(L_4)\right|+j\underset{\forall i}{\min}\frac{D_i\omega}{\omega_0v_{d,i}}\right|\\
&-\left|\underset{\forall i,j}{\max}\left|\frac{\omega^2}{k_{I,i}}
-\lambda_j(L_1)\right|-j\underset{\forall i}{\max}\frac{k_{P,i}\omega}{k_{I,i}}\right|=\delta\,.
\end{aligned}
\end{equation}
Considering $v_{d,i}\approx 1$pu at small-signal, then \eqref{eq:decoupled condition}
in Proposition~\ref{The decoupling condition} holds.}

\textcolor{catalogueblue}{Further, $\delta \geq {\min}\frac{D_i\omega}{\omega_0}-{\max}\frac{k_{P,i}\omega}{k_{I,i}}$, substituting it into \eqref{eq:decoupled condition} yields \eqref{eq:simple condition}. }

 \section{\textcolor{catalogueblue}{The proof of Proposition~\ref{stability condition}}}
 \label{sec:appendixC}
 
\begin{proof}
According to \eqref{eq:decoupled}, the equivalent characteristic polynomial can be rewritten as
\begin{equation}\label{eq:GNC function}
\begin{aligned}
\begin{cases}
 {\rm det}\left\{{\bm I}_n+[{\bm D}_{\rm gfl}s+{\bm L}_1]\frac{{\bm H}^{-1}_{\rm gfl}}{s^2}\right\}=0\\
 {\rm det}\left\{{\bm I}_m+[{\bm D}_{\rm gfm}(s)s+{\bm L}_4(s)]\frac{{\bm H}^{-1}_{\rm gfm}}{s^2} \right\}=0
\end{cases}
\,.
\end{aligned}
\end{equation}

Because ${\bm H}^{-1}_{\rm gfl}/s^2$ and ${\bm H}^{-1}_{\rm gfm}/s^2$ are diagonal matrices and their eigenvalues phase are all $-180^\circ$, according to GNC, if at the crossing frequency $s=j\omega_c$, the phases of eigenvalues of ${\bm D}_{\rm gfl}s+{\bm L}_1$ and ${\bm D}_{\rm gfm}(s)s+{\bm L}_4(s)$ are greater than $0^\circ$, the system is stable.

The conditions can be transformed to the real part of eigenvalues of  ${\bm D}_{\rm gfl}+{\bm L}_1/s$ and ${\bm D}_{\rm gfm}(s)+{\bm L}_4(s)/s$ are lager than $0$, i.e. at across frequency $s=j\omega_c$,
\begin{equation}\label{eq:positive realness}
\begin{aligned}
\begin{cases}
\underline{\lambda}\left[\frac{{\bm D}_{\rm gfl}+{\bm D}^H_{\rm gfl}}{2}+\frac{{\bm L}_1-{\bm L}^H_1}{2s}\right]>0\\
\underline{\lambda}\left[\frac{{\bm D}_{\rm gfm}(s)+{\bm D}^H_{\rm gfm}(s)}{2}+\frac{{\bm L}_4(s)-{\bm L}^H_4(s)}{2s}\right]>0
\end{cases}
\,,
\end{aligned}
\end{equation}
where $(\cdot)^H$ represents the conjugate transpose. For any matrix $A$, if $\underline{\lambda}(A+A^H)>0$, then the real part of the eigenvalues of $A$ is greater than $0$.

Under the decoupled conditions with small $\varphi$, as shown in Fig. \ref{fig9}, ${\bm K}^{\rm sy}_{\rm gfl}\approx0$ and ${\bm L}_1$ is approximately a diagonal matrix, thus $\frac{{\bm L}_1-{\bm L}^H_1}{s} \approx 0$. And ${\bm D}_{\rm gfl}\approx{\rm diag}\left\{T_{P,i}\right\}{\bm V}^{\rm gfl}_d+{\bm K}^{\rm d}_{\rm gfl}$ is approximately a Hermitian matrix, thus $\frac{{\bm D}_{\rm gfl}+{\bm D}^H_{\rm gfl}}{2}\approx {\bm D}_{\rm gfl}$. The condition is simplified as $\underline{\lambda}({\bm D}_{\rm gfl})\approx\underline{\lambda}({\rm diag}\left\{T_{P,i}v_{d,i}\right\}+{\bm K}^{\rm d}_{\rm gfl})\geq\underset{\forall i\in[1,n]}{{\rm min}}(T_{P,i}v_{d,i}) +\underline{\lambda}({\bm K}^{\rm d}_{\rm gfl})>0$, because ${\rm diag}\left\{T_{P,i}v_{d,i}\right\}$ and ${\bm K}^{\rm d}_{\rm gfl}$ are Hermitian matrices that satisfy the eigenvalue interlacing theorem ($T_P=k_P/k_I$). Thus, we obtain the stability conditions shown in the first function of \eqref{eq:static stability} from \eqref{eq:positive realness}.

For the GFM subsystem, ${\bm D}_{\rm gfm}(s)={\rm diag}\left\{T_{D,i}\right\}+\alpha(s){\bm K}^{\rm d}_{\rm gfm}$, ${\rm diag}\left\{T_{D,i}\right\}$ is a Hermitian matrix ($T_D=D$) and ${\bm K}^{\rm d}_{\rm gfm}$ is a skew-Hermitian matrix. Thus, $\frac{{\bm D}_{\rm gfm}(s)+{\bm D}^H_{\rm gfm}(s)}{2}={\rm diag}\left\{T_{D,i}\right\}+{\alpha^*(s)}{\bm K}^{\rm d}_{\rm gfm} $. In addition, ${\bm L}_4(s)=\alpha(s){\bm N}^{\rm sy}_{\rm gfm}+{\bm I}^{\rm gfm}_d$, ${\bm N}^{\rm sy}_{\rm gfm}$ and ${\bm I}^{\rm gfm}_d$ are Hermitian matrices. Thus, ${\bm L}_4(s)=\frac{{\bm L}_4(s)+{\bm L}^H_4(s)}{2s}={\alpha^*(s)}\frac{{\bm N}^{\rm sy}_{\rm gfm}}{s}$. According to the eigenvalue interlacing theorem, then we obtain the stability conditions shown in the second equation of \eqref{eq:static stability} from \eqref{eq:positive realness}.
\end{proof}

 \section{The equivalent GFL subsystem model}
 \label{sec:appendixD}
\textcolor{catalogueblue}{Multiplying the first equation in \eqref{eq:Shcur} by ${\rm diag}(T_{J,i})s^2,i\in [1,n]$ on the left gives 
 \begin{equation}\label{eq:eqgfl}
 \begin{aligned}
 &{\rm diag}(T_{J,i})s^2+ {\rm diag}(T_{P,i}s+1)\left({\bm L}_1-\widetilde{{\bm L}}(s)\right)=\\
  &{\rm diag}(T_{J,i})s^2+0.1 {\rm diag}(T_{P,i}s+1){\bm V}^{\rm gfl}_d+\\
  & {\rm diag}(T_{P,i}s+1)\left({\bm L}_1-\widetilde{{\bm L}}(s)-0.1{\bm V}^{\rm gfl}_d\right)
 \,.
 \end{aligned}
 \end{equation}
where, $\widetilde{{\bm L}}(s)=-{\bm K}^d_{\rm gfl}s+{\bm L}_2{\bm N}^{-1}_{\rm gfm}(s){\bm L}_3$.}

\textcolor{catalogueblue}{Left-multiply \eqref{eq:eqgfl} by matrix $\left[{\rm diag}(T_{J,i})s^2+0.1 {\rm diag}(T_{P,i}s+1){\bm V}^{\rm gfl}_d\right]^{-1}$, then we can obtain
 \begin{equation}\label{eq:eqgfl1}
 \begin{aligned}
 &{\bm I}_n+{\rm diag}\left(\frac{T_{P,i}s+1}{T_{J,i}s^2+0.1T_{P,i}{\bm V}^{\rm gfl}_{d,i}s+0.1{\bm V}^{\rm gfl}_{d,i}}\right)
 \\&\left({\bm L}_1-\widetilde{{\bm L}}(s)-0.1{\bm V}^{\rm gfl}_d\right)
 \,,
 \end{aligned}
 \end{equation}
which is same as \eqref{eq:small phase theroy function}.}

 \section{The parameters of converters}
 \label{sec:appendixE}
 \begin{table}[h]
\centering
\caption{Parameters of the Three-Converter System}
\renewcommand{\arraystretch}{1.1}
\scriptsize 
\begin{tabular}{@{}llll@{}} 
\toprule
\multicolumn{4}{c}{\textbf{Base Values for Per-unit Calculation}} \\
\midrule
$f_{\text{base}} = 50\,\text{Hz}$ & $\omega_{\text{base}} = 2\pi f_{\text{base}}$ & $U_{\text{base}} = 220\,\text{kV}$ & $S_{\text{base}} = 100\,\text{MVA}$\\
\midrule
\multicolumn{4}{c}{\textbf{Parameters of the Converter (p.u.)}} \\
\midrule
$L_f = 0.05$& $C_f = 0.05$& ${\rm PI}_{\rm C}(s)=1+2/s$ &  $T_v=0.001$\\
\midrule
\multicolumn{4}{c}{\textbf{Parameters of the Grid-Forming Converter (p.u.)}} \\
\midrule
$J = 2$ & $D = 20$ &  ${\rm PI}_{\rm V}(s)=0.4+1.5/s$ &  \\
\midrule
\multicolumn{4}{c}{\textbf{Parameters of the Grid-Following Converter (p.u.)}} \\
\midrule
$k_P = 15$ & $k_I = 3500$ &   &  \\
\bottomrule
\end{tabular}
\end{table}
}

\bibliographystyle{IEEEtran}
\bibliography{IEEEabrv,RS}

\begin{thebibliography}{10}
\providecommand{\url}[1]{#1}
\csname url@samestyle\endcsname
\providecommand{\newblock}{\relax}
\providecommand{\bibinfo}[2]{#2}
\providecommand{\BIBentrySTDinterwordspacing}{\spaceskip=0pt\relax}
\providecommand{\BIBentryALTinterwordstretchfactor}{4}
\providecommand{\BIBentryALTinterwordspacing}{\spaceskip=\fontdimen2\font plus
\BIBentryALTinterwordstretchfactor\fontdimen3\font minus
  \fontdimen4\font\relax}
\providecommand{\BIBforeignlanguage}[2]{{%
\expandafter\ifx\csname l@#1\endcsname\relax
\typeout{** WARNING: IEEEtran.bst: No hyphenation pattern has been}%
\typeout{** loaded for the language `#1'. Using the pattern for}%
\typeout{** the default language instead.}%
\else
\language=\csname l@#1\endcsname
\fi
#2}}
\providecommand{\BIBdecl}{\relax}
\BIBdecl

\bibitem{xiongfei:syn_overview}
X.~Wang, M.~G. Taul, H.~Wu, Y.~Liao, F.~Blaabjerg, and L.~Harnefors,
  ``Grid-synchronization stability of converter-based resources—an
  overview,'' \emph{{IEEE} Open J. Ind. Appl.}, vol.~1, pp. 115--134, 2020.

\bibitem{kunder:stability_classification}
P.~Kundur, J.~Paserba, V.~Ajjarapu, G.~Andersson, A.~Bose, C.~Canizares,
  N.~Hatziargyriou, D.~Hill, A.~Stankovic, C.~Taylor, T.~Van~Cutsem, and
  V.~Vittal, ``Definition and classification of power system stability
  ieee/cigre joint task force on stability terms and definitions,''
  \emph{{IEEE} Trans. Power Syst.}, vol.~19, no.~3, pp. 1387--1401, 2004.

\bibitem{Linbin:PLL_sy}
L.~Huang, H.~Xin, Z.~Li, P.~Ju, H.~Yuan, Z.~Lan, and Z.~Wang,
  ``Grid-synchronization stability analysis and loop shaping for pll-based
  power converters with different reactive power control,'' \emph{{IEEE} Trans.
  Smart Grid}, vol.~11, no.~1, pp. 501--516, 2020.

\bibitem{Thomas:GFM_GFL_comparison}
D.~Pattabiraman, R.~H. Lasseter., and T.~M. Jahns, ``Comparison of grid
  following and grid forming control for a high inverter penetration power
  system,'' in \emph{Proc. IEEE Power \& Energy Soc. Gen. Meeting.}, 2018, pp.
  1--5.

\bibitem{Huanhai:gSCR}
W.~Dong, H.~Xin, D.~Wu, and L.~Huang, ``Small signal stability analysis of
  multi-infeed power electronic systems based on grid strength assessment,''
  \emph{{IEEE} Trans. Power Syst.}, vol.~34, no.~2, pp. 1393--1403, 2019.

\bibitem{Sunjian:squence_model}
M.~Cespedes and J.~Sun, ``Impedance modeling and analysis of grid-connected
  voltage-source converters,'' \emph{{IEEE} Trans. Power Electron.}, vol.~29,
  no.~3, pp. 1254--1261, 2014.

\bibitem{Zhangchen:squence_model}
C.~Zhang, X.~Cai, A.~Rygg, and M.~Molinas, ``Sequence domain siso equivalent
  models of a grid-tied voltage source converter system for small-signal
  stability analysis,'' \emph{{IEEE} Trans. Energy Convers.}, vol.~33, no.~2,
  pp. 741--749, 2018.

\bibitem{Leon:dq_model}
W.~Cao, Y.~Ma, L.~Yang, F.~Wang, and L.~M. Tolbert, ``D–q impedance based
  stability analysis and parameter design of three-phase inverter-based ac
  power systems,'' \emph{{IEEE} Trans. Ind. Electron.}, vol.~64, no.~7, pp.
  6017--6028, 2017.

\bibitem{WenBo:dq_model}
B.~Wen, R.~Burgos, D.~Boroyevich, P.~Mattavelli, and Z.~Shen, ``Ac stability
  analysis and dq frame impedance specifications in power-electronics-based
  distributed power systems,'' \emph{{IEEE} Trans. Emerg. Sel. Topics Power
  Electron.}, vol.~5, no.~4, pp. 1455--1465, 2017.

\bibitem{Huanhai:gimpendece}
L.~Xu, H.~Xin, L.~Huang, H.~Yuan, P.~Ju, and D.~Wu, ``Symmetric admittance
  modeling for stability analysis of grid-connected converters,'' \emph{{IEEE}
  Trans. Energy Convers.}, vol.~35, no.~1, pp. 434--444, 2020.

\bibitem{Huqi:torque_method}
J.~Hu, Q.~Hu, B.~Wang, H.~Tang, and Y.~Chi, ``Small signal instability of
  pll-synchronized type-4 wind turbines connected to high-impedance ac grid
  during lvrt,'' \emph{{IEEE} Trans. Energy Convers.}, vol.~31, no.~4, pp.
  1676--1687, 2016.

\bibitem{iecon:PLL_band}
P.~Pathomchaiwat, S.~Supanyapong, and W.~Sadara, ``Effect of pll bandwidth on
  stability of grid-tied inverter under grid impedance variations,'' in
  \emph{2025 13th International Electrical Engineering Congress (iEECON)},
  2025, pp. 1--5.

\bibitem{zzx:2021}
Z.~Zou, B.~D. Besheli, R.~Rosso, M.~Liserre, and X.~Wang, ``Interactions
  between two phase-locked loop synchronized grid converters,'' \emph{{IEEE}
  Trans. Ind. Appl.}, vol.~57, no.~4, pp. 3935--3947, 2021.

\bibitem{wgz:numuricalrange}
\BIBentryALTinterwordspacing
Q.~Shi, G.~Wang, L.~Ding, Z.~Wang, and W.~Wang, ``A sufficient condition for
  small-signal synchronization stability with applications to pll-dominated
  oscillation mode analysis,'' \emph{Int. J. Electr. Power Energy Syst.}, vol.
  157, p. 109882, 2024. [Online]. Available:
  \url{https://www.sciencedirect.com/science/article/pii/S0142061524001030}
\BIBentrySTDinterwordspacing

\bibitem{Huanhai:place_GFM}
C.~Yang, L.~Huang, H.~Xin, and P.~Ju, ``Placing grid-forming converters to
  enhance small signal stability of pll-integrated power systems,''
  \emph{{IEEE} Trans. Power Syst.}, vol.~36, no.~4, pp. 3563--3573, 2021.

\bibitem{gfm100}
L.~Tian, S.~Chen, W.~Zheng, D.~Liu, C.~Ye, and K.~Jiang, ``Impact analysis of
  the distribution and installation proportions of grid-forming and
  grid-following converters on the stability of a 100\% renewable energy
  system,'' in \emph{2024 27th International Conference on Electrical Machines
  and Systems (ICEMS)}, 2024, pp. 613--620.

\bibitem{wxf:gfm_review}
F.~Zhao, T.~Zhu, L.~Harnefors, B.~Fan, H.~Wu, Z.~Zhou, Y.~Sun, and X.~Wang,
  ``Closed-form solutions for grid-forming converters: A design-oriented
  study,'' \emph{IEEE Open J. Power Electron.}, vol.~5, pp. 186--200, 2024.

\bibitem{GFM:keytechonology}
J.~Matevosyan, B.~Badrzadeh, T.~Prevost, E.~Quitmann, D.~Ramasubramanian,
  H.~Urdal, S.~Achilles, J.~MacDowell, S.~Hsien~Huang, V.~Vital,
  J.~O’Sullivan, and R.~Quint, ``Grid-forming inverters: Are they the key for
  high renewable penetration?'' \emph{IEEE Power Energy Mag.}, vol.~21, no.~2,
  pp. 77--86, 2023.

\bibitem{dcsygfm:2023}
L.~Zhao, Z.~Jin, and X.~Wang, ``Small-signal synchronization stability of
  grid-forming converters with regulated dc-link dynamics,'' \emph{IEEE Trans.
  Ind. Electron.}, vol.~70, no.~12, pp. 12\,399--12\,409, 2023.

\bibitem{frade:dampeenhance}
X.~Xiong, X.~Li, B.~Luo, M.~Huang, C.~Zhao, and F.~Blaabjerg, ``An additional
  damping torque method for low-frequency stability enhancement of virtual
  synchronous generators,'' \emph{IEEE Trans. Power Electron.}, vol.~39,
  no.~12, pp. 15\,858--15\,869, 2024.

\bibitem{cyd_poweroscillation:2025ps}
X.~Liu, Y.~Chen, Z.~Wang, C.~Luo, Z.~Xie, Y.~Fu, and X.~Li, ``Power oscillation
  analysis and dual mutual damping control for hybrid system with gfm and gfl
  inverters under varying short circuit ratio,'' \emph{IEEE Trans. Power
  Syst.}, pp. 1--12, 2025.

\bibitem{kehao:dual_axis}
H.~Xin, K.~Zhuang, P.~Hu, L.~Huang, X.~Liu, and G.~Hu, ``Small signal
  synchronization stability analysis of interconnected system containing
  grid-forming and grid-following converters from the perspective of dual-axis
  synchronous generator,'' \emph{{IEEE} Trans. Energy Convers.}, pp. 1--14,
  2024.

\bibitem{PLL_GFM:whwxf}
Y.~Wu, H.~Wu, F.~Zhao, Z.~Li, and X.~Wang, ``Influence of pll on stability of
  interconnected grid-forming and grid-following converters,'' \emph{IEEE
  Trans. Power Electron.}, vol.~39, no.~10, pp. 11\,980--11\,985, 2024.

\bibitem{unifiedmodel:EPSR}
\BIBentryALTinterwordspacing
C.~Yu, H.~Xu, C.~Chen, M.~Sun, X.~Fu, and X.~Zhang, ``A unified approach to
  modeling and stability equivalence analysis for grid-forming and
  grid-following converters,'' \emph{Electr. Power Syst. Res.}, vol. 235, p.
  110861, 2024. [Online]. Available:
  \url{https://www.sciencedirect.com/science/article/pii/S0378779624007478}
\BIBentrySTDinterwordspacing

\bibitem{Liyitong:synchronization}
Y.~Li, T.~C. Green, and Y.~Gu, ``The intrinsic communication in power systems:
  A new perspective to understand synchronization stability,'' \emph{{IEEE}
  Trans. Circuits Syst. {I}}, vol.~70, no.~11, pp. 4615--4626, 2023.

\bibitem{Yangziqian:synchronization}
Z.~Yang, M.~Zhan, D.~Liu, C.~Ye, K.~Cao, and S.~Cheng, ``Small-signal
  synchronous stability of a new-generation power system with 100\% renewable
  energy,'' \emph{{IEEE} Trans. Power Syst.}, vol.~38, no.~5, pp. 4269--4280,
  2023.

\bibitem{MARUI:multi_time}
\BIBentryALTinterwordspacing
R.~Ma, Y.~Zhang, M.~Han, J.~Kurths, and M.~Zhan, ``Synchronization stability
  and multi-timescale analysis of renewable-dominated power systems,''
  \emph{Chaos: An Interdisciplinary Journal of Nonlinear Science}, vol.~33,
  no.~8, p. 082101, 08 2023. [Online]. Available:
  \url{https://doi.org/10.1063/5.0156459}
\BIBentrySTDinterwordspacing

\bibitem{Linbin:gain_phase}
L.~Huang, D.~Wang, X.~Wang, H.~Xin, P.~Ju, K.~H. Johansson, and F.~Dörfler,
  ``Gain and phase: Decentralized stability conditions for power
  electronics-dominated power systems,'' \emph{{IEEE} Trans. Power Syst.},
  vol.~39, no.~6, pp. 7240--7256, 2024.

\bibitem{ieee39}
C.~Canizares, T.~Fernandes, E.~Geraldi, L.~Gerin-Lajoie, M.~Gibbard,
  I.~Hiskens, J.~Kersulis, R.~Kuiava, L.~Lima, F.~DeMarco, N.~Martins, B.~C.
  Pal, A.~Piardi, R.~Ramos, J.~dos Santos, D.~Silva, A.~K. Singh, B.~Tamimi,
  and D.~Vowles, ``Benchmark models for the analysis and control of
  small-signal oscillatory dynamics in power systems,'' \emph{IEEE Trans. Power
  Syst.}, vol.~32, no.~1, pp. 715--722, 2017.

\bibitem{kato1966perturbation}
T.~Kato, \emph{Perturbation Theory for Linear Operators}.\hskip 1em plus 0.5em
  minus 0.4em\relax Berlin: Springer-Verlag, 1966.

\bibitem{wang2020phases}
D.~Wang, W.~Chen, S.~Z. Khong, and L.~Qiu, ``On the phases of a complex
  matrix,'' \emph{Linear Algebra and its Applications}, vol. 593, pp. 152--179,
  2020.

\bibitem{chen2019phase}
W.~Chen, D.~Wang, S.~Z. Khong, and L.~Qiu, ``Phase analysis of mimo lti
  systems,'' in \emph{2019 IEEE 58th Conference on Decision and Control
  (CDC)}.\hskip 1em plus 0.5em minus 0.4em\relax IEEE, 2019, pp. 6062--6067.

\end{thebibliography}

\end{document}